\newtheorem{theorem}{Theorem}[section]
\newtheorem{lemma}[theorem]{Lemma}
\newtheorem{proposition}[theorem]{Proposition}
\newtheorem{claim}[theorem]{Claim}
\newtheorem{example}[theorem]{Example}
\newcommand{\problemtitle}[1]{\gdef\@problemtitle{#1}}
\newcommand{\probleminput}[1]{\gdef\@probleminput{#1}}
\newcommand{\problemoutput}[1]{\gdef\@problemoutput{#1}}
  \par\addvspace{.5\baselineskip}
  \par\addvspace{.5\baselineskip}
\title{Reconfiguration of the Union of Arborescences\thanks{A conference version of this paper appears in ISAAC 2023~\cite{isaac2023}, 
which is selected as one of the best papers in ISAAC 2023 by the program committee.}}
\author{Yusuke Kobayashi\thanks{Research Institute for Mathematical Sciences, Kyoto University.
E-mail: yusuke@kurims.kyoto-u.ac.jp}
\and
Ryoga Mahara\thanks{Department of Mathematical Informatics, University of Tokyo.
E-mail: mahara@mist.i.u-tokyo.ac.jp}
\and
Tam\'{a}s Schwarcz\thanks{MTA-ELTE Momentum Matroid Optimization Research Group, Department of Operations Research, ELTE E\"{o}tv\"{o}s Lor\'{a}nd University. E-mail: tamas.schwarcz@ttk.elte.hu}
}
\date{}
\begin{document}

\maketitle

\begin{abstract}
An arborescence in a digraph is an acyclic arc subset in which every vertex except a root has exactly one incoming arc. 
In this paper, we show the reconfigurability of the union of $k$ arborescences for fixed $k$ 
in the following sense: 
for any pair of arc subsets that can be partitioned into $k$ arborescences, 
one can be transformed into the other by exchanging arcs one by one
so that every intermediate arc subset can also be partitioned into $k$ arborescences. 
This generalizes the result by Ito et al.~(2023), who showed the case with $k=1$. 
Since the union of $k$ arborescences can be represented as a common matroid basis of two matroids, 
our result gives a new non-trivial example of matroid pairs for which 
two common bases are always reconfigurable to each other. 
\end{abstract}

\section{Introduction}

\subsection{Reconfigurability of Common Bases of Matroids}

Exchanging a pair of elements, i.e., adding one element to a set and removing another element from it, is a fundamental operation in matroid theory. 
The basis exchange axiom for matroids implies that,  
for any pair of bases of a matroid,  
one can be transformed into the other by repeatedly exchanging pairs of elements
so that all the intermediate sets are also bases. 
That is, the basis family of a matroid is connected with respect to element exchanges. 
This is an important property of matroid basis families that is used in various contexts, 
e.g., it is a key to show the validity of a local search algorithm
for finding a maximum weight basis. 

In contrast to matroid basis families, a family of common bases of two matroids does not necessarily enjoy this property. 
More precisely, for two matroids $M_1$ and $M_2$ over a common ground set, 
the following condition, which we call {\it Reconfigurability of Common Bases (RCB)}, does not necessarily hold. 
\begin{enumerate}[label=(RCB),align=parleft,labelindent=0pt,itemindent=0pt,labelsep=5pt,leftmargin=*]
\item \label{it:rcb} For any pair of common bases $B$ and $B'$ of two matroids $M_1$ and $M_2$,
there exists a sequence of common bases $B_0, B_1, \dots , B_\ell$ such that $B_0 = B$, $B_\ell = B'$,  
$B_i$ is a common basis, and $|B_{i-1} \setminus B_i| = |B_{i} \setminus B_{i-1}| = 1$ for each $i \in \{1,2, \ldots , \ell\}$.
\end{enumerate}

As an example, suppose that $G$ is a cycle of length four, which has exactly two perfect matchings. 
Since $G$ is a bipartite graph, the family of all the perfect matchings in $G$ 
can be represented as a family of common bases of two matroids, 
and we see that  \ref{it:rcb} does not hold in this setting. 

On the other hand, there are some special cases satisfying \ref{it:rcb}. 
When $M_1$ is a graphic matroid and $M_2$ is the dual matroid of $M_1$, 
it is known that \ref{it:rcb} holds~\cite{farber1985edge}. 
A conjecture by White~\cite{white1980unique} that \ref{it:rcb} holds when $M_1$ is an arbitrary matroid and $M_2$ is its dual has been open for more than 40 years. The conjecture was verified for strongly base orderable matroids~\cite{lason2014toric} and for sparse paving matroids~\cite{bonin2013basis}. Recently, the proof of the graphic case was extended to regular matroids~\cite{berczi2023reconfiguration}, moreover, the conjecture was settled for split matroids~\cite{berczi2023exchange}, a large class containing paving matroids as well. 

Another special case with property \ref{it:rcb} is the family of  all arborescences in a digraph. 
For a digraph $D=(V, A)$ with a specified vertex $r \in V$ called a \emph{root}, 
an \emph{$r$-arborescence} is an acyclic arc subset of $A$ in which every vertex in $V \setminus \{r\}$ has exactly one incoming arc. 
When the root vertex is not specified, it is simply called an \emph{arborescence}.  
We see that an $r$-arborescence (or an arborescence) is represented as a common basis of the graphic matroid corresponding to the acyclic constraint and the partition matroid corresponding to the indegree constraint. 
It is shown by Ito et al.~\cite{ito2023reconfiguring} that the family of all arborescences (or $r$-arborescences) in a digraph satisfies \ref{it:rcb}.

\subsection{Our Results}

In this paper, we mainly study the union of $k$ arc-disjoint $r$-arborescences, where $k$ is a fixed positive integer. 
For a digraph $D=(V, A)$ and a root $r \in V$, 
let $\mathcal{F}_{k, r} \subseteq 2^A$ denote the set of all arc subsets that 
can be partitioned into $k$ arc-disjoint $r$-arborescences.
It is known that $\mathcal{F}_{k, r}$ is represented as the common bases of two matroids $M_1$ and $M_2$, 
where $M_1$ is the union of $k$ graphic matroids and $M_2$ is the direct sum of uniform matroids corresponding to the indegree constraint; see~\cite[Corollary 53.1c]{lexbook}.  
The main contribution of this paper is to show that such a pair of $M_1$ and $M_2$ satisfies property \ref{it:rcb}. 
Formally, our main result is stated as follows. 

\begin{theorem}
\label{thm:main}
Let $D=(V, A)$ be a digraph with a root $r \in V$ and let $k$ be a positive integer. 
Let $\mathcal{F}_{k, r} \subseteq 2^A$ denote the family of all arc subsets that 
can be partitioned into $k$ arc-disjoint $r$-arborescences.
For any $S, T \in \mathcal{F}_{k, r}$,  
there exists a sequence 
$T_0, T_1, \dots , T_\ell$ such that $T_0 = S$, $T_\ell = T$,  
$T_i \in \mathcal{F}_{k, r}$ for $i \in \{0,1, \ldots , \ell\}$, and $|T_{i-1} \setminus T_i| = |T_{i} \setminus T_{i-1}| = 1$ for $i \in \{1,2, \ldots , \ell\}$. 
Furthermore, such a sequence can be found in polynomial time. 
\end{theorem}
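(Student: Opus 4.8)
The plan is to reduce the problem to reconfiguration of common bases and then exploit the special structure coming from arborescences. The key observation is that $S, T \in \mathcal{F}_{k,r}$ means both are common bases of $M_1$ (the union of $k$ graphic matroids of $D$) and $M_2$ (the indegree partition matroid, where each non-root vertex $v$ has capacity exactly $k$ and the root has capacity $0$). So it suffices to produce a reconfiguration sequence within the common basis family. Rather than attack two general matroids, I would try to mimic and generalize the $k=1$ argument of Ito et al.~\cite{ito2023reconfiguring}. The natural inductive parameter is $|S \setminus T| = |T \setminus S|$; the goal of one ``phase'' is to find a single common basis $S'$ with $|S' \setminus T| < |S \setminus T|$ that is reachable from $S$ by single-element exchanges through $\mathcal{F}_{k,r}$.

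\medskip

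\noindent\textbf{Key steps.} First, I would fix an arc $a \in T \setminus S$ and try to insert it into $S$. Since $T$ is a union of $k$ arborescences, $a = (u,v)$ enters some non-root $v$, and in $S$ the vertex $v$ already has exactly $k$ incoming arcs; so inserting $a$ forces the removal of some arc $b$ entering $v$ (this is the $M_2$-side constraint, which is easy). The delicate part is maintaining membership in $M_1$: after the swap $S - b + a$ we need the result to still decompose into $k$ arborescences, equivalently (by \cite[Corollary 53.1c]{lexbook}) to be a common basis. I would look at the exchange graph / the symmetric-exchange structure between $S$ and $T$ for $M_1$ and pick $b$ along a shortest alternating path so that $S - b + a$ stays a basis of $M_1$; combined with the forced $M_2$-exchange this should stay in $\mathcal{F}_{k,r}$. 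Iterating, one decreases $|S \setminus T|$. The technical heart is showing that the element $b$ we must remove for the $M_2$ constraint can be chosen to coincide with (or be compatible with) an element whose removal preserves the $M_1$ basis property — this is exactly where the two matroids interact and where ``generic'' common bases fail.

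\medskip

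\noindent\textbf{Where I expect the obstacle.} In general two matroids do not satisfy \ref{it:rcb}, so the proof must use something genuinely specific to arborescences. I anticipate the main difficulty is the case where the forced removal (dictated by the head of $a$) and the allowed removal (dictated by acyclicity / the graphic union structure) cannot be reconciled in a single swap — then one needs an intermediate ``detour'' that temporarily moves arcs around without decreasing $|S\setminus T|$, analogous to how Ito et al.\ handle cycles created when adding an arc. Concretely, adding $a$ and deleting the forced $b$ may create a directed cycle or violate the matroid-union rank condition; to fix this I would rotate along that obstruction, swapping arcs of $S$ for arcs of $T$ around the cycle (each swap staying in $\mathcal{F}_{k,r}$) until the obstruction is resolved, using the fact that $T$ itself is obstruction-free. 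Making this rotation argument work uniformly for all $k$ — in particular controlling which of the $k$ arborescences each arc belongs to during the rotation — is the part I expect to require the most care, and it is presumably where the bulk of the paper's technical work lies. Finally, the polynomial-time claim should follow since each step (finding the alternating path, detecting and rotating along an obstruction) is a polynomial matroid-intersection-type computation, and the total number of steps is polynomially bounded because $|S\setminus T|$ is monotonically controlled.
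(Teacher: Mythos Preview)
Your high-level plan---induct on $|S\setminus T|$ and, when a single swap cannot decrease it, make a bounded detour first---matches the paper's architecture. But the concrete mechanism you propose does not, and the gap is real.

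The paper never works with the matroid $M_1$ abstractly: there is no exchange graph for the matroid union, no alternating path, and no attempt to reconcile an ``$M_1$-allowed removal'' with an ``$M_2$-forced removal''. Instead it uses Edmonds' characterization (Theorem~\ref{thm:edmonds}) directly: $S'\in\mathcal{F}_{k,r}$ iff every non-root has in-degree $k$ and every nonempty $X\subseteq V\setminus\{r\}$ has $\delta^-_{S'}(X)\ge k$. The relevant structure is the family of \emph{tight sets} ($\delta^-_S(X)=k$), closed under union and intersection. For each $f_i\in T\setminus S$ one defines $X_i$ as the minimal tight set containing $f_i$; Lemma~\ref{lem:01} then says $S-e+f_i\in\mathcal{F}_{k,r}$ whenever ${\rm head}(e)={\rm head}(f_i)$ and $e$ lies inside $X_i$. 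That is the precise ``compatibility'' test you were groping for, and it comes from cut submodularity, not matroid exchange axioms.

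The potential that drives the detour is also different from what you suggest. One builds an auxiliary digraph $H$ on vertex set $[p]$ with an arc $(i,j)$ whenever $X_i$ contains $e_j$; $H$ always has a dicycle (Lemma~\ref{lem:04}), and a self-loop at $i$ means $S-e_i+f_i$ works in one step. Otherwise one swaps $S\mapsto S-f'_1+f_1$ for a carefully chosen $f'_1\in S\cap T$ (Lemma~\ref{lem:09}), and the technical heart (Lemmas~\ref{lem:03} and~\ref{lem:10}) shows that the new auxiliary digraph has a strictly shorter shortest dicycle. So each phase terminates in at most $\min\{p,k\}$ swaps.

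Finally, your anticipated main difficulty---``controlling which of the $k$ arborescences each arc belongs to during the rotation''---is a red herring. The proof never decomposes $S$ into individual arborescences; Edmonds' cut condition makes that unnecessary, and trying to maintain an explicit decomposition would almost certainly make the argument harder rather than easier.
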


We also prove the analogue of Theorem~\ref{thm:main} in the case when a feasible arc set is the union of $k$ arc-disjoint arborescences that may have distinct roots.
Formally, our result is stated as follows and will be discussed in Section~\ref{sec:distinctroot}. 
\begin{theorem}
\label{thm:main2}
Let $D=(V, A)$ be a digraph and let $k$ be a positive integer. 
Let $\mathcal{F}_{k} \subseteq 2^A$ denote the family of all arc subsets that 
can be partitioned into $k$ arc-disjoint arborescences.
For any $S, T \in \mathcal{F}_{k}$,  
there exists a sequence 
$T_0, T_1, \dots , T_\ell$ such that $T_0 = S$, $T_\ell = T$,  
$T_i \in \mathcal{F}_{k}$ for $i \in \{0,1, \ldots , \ell\}$, and $|T_{i-1} \setminus T_i| = |T_{i} \setminus T_{i-1}| = 1$ for $i \in \{1,2, \ldots , \ell\}$. 
Furthermore, such a sequence can be found in polynomial time. 
\end{theorem}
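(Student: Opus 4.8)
The plan is to derive Theorem~\ref{thm:main2} from Theorem~\ref{thm:main} by a standard reduction that turns a multi-root arborescence packing problem into a single-root one. Recall the classical trick (see, e.g., \cite{lexbook}): given $D=(V,A)$, introduce a new vertex $r$, and for each $v\in V$ add $k$ parallel arcs from $r$ to $v$; call the resulting digraph $D'=(V',A')$ with $V'=V\cup\{r\}$ and $A'=A\cup A_r$, where $A_r$ is the set of newly added arcs. The key structural fact I would first record as a lemma is a bijection-type correspondence: an arc set $F\subseteq A$ lies in $\mathcal{F}_k$ (i.e.\ partitions into $k$ arc-disjoint arborescences in $D$, with arbitrary roots) if and only if $F$ extends to a member of $\mathcal{F}_{k,r}$ in $D'$, and moreover every member $F'\in\mathcal{F}_{k,r}(D')$ satisfies $|F'\cap A_r|=k\cdot(\text{number of "root components"})$ in a controlled way — more precisely, $F'\cap A$ always lies in $\mathcal{F}_k(D)$, and conversely any $F\in\mathcal{F}_k(D)$ can be completed to such an $F'$ by adding exactly the right arcs of $A_r$. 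The point is that in any decomposition of $F'$ into $k$ $r$-arborescences in $D'$, deleting the arcs incident to $r$ leaves $k$ arc-disjoint branchings in $D$ whose union is a member of $\mathcal{F}_k$, and conversely.

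Next I would handle the mismatch in cardinalities. Members of $\mathcal{F}_k(D)$ need not all have the same size (two vertex-disjoint single-vertex-plus-one-arc gadgets versus one big arborescence on the same vertex set can have different arc counts once roots vary), whereas members of $\mathcal{F}_{k,r}(D')$ all have size $k(|V'|-1)=k|V|$. So the completion $F\mapsto F'$ adds $k|V|-|F|$ arcs from $A_r$; I would make this completion canonical enough that the reconfiguration sequence we pull back is well-defined. The cleanest route: given $S,T\in\mathcal{F}_k(D)$, pick \emph{any} completions $S',T'\in\mathcal{F}_{k,r}(D')$, apply Theorem~\ref{thm:main} in $D'$ to get a sequence $S'=T'_0,T'_1,\dots,T'_\ell=T'$ of members of $\mathcal{F}_{k,r}(D')$ with unit symmetric-difference steps, and then project: set $T_i := T'_i\cap A$. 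The projection of each $T'_i$ is in $\mathcal{F}_k(D)$ by the lemma, $T_0=S$, $T_\ell=T$, and consecutive $T_i$ differ by at most one added and one removed arc of $A$ (a step $T'_{i-1}\to T'_i$ that swaps two arcs of $A_r$ becomes a trivial no-op; a step swapping an arc of $A$ for an arc of $A_r$, or vice versa, or two arcs of $A$, projects to a step that changes $T_i$ by at most one arc on each side). After deleting repeated consecutive equal sets from the projected sequence, we obtain exactly the kind of sequence required by Theorem~\ref{thm:main2}. Polynomiality is inherited: $D'$ has $O(k|V|)$ extra arcs, and the sequence from Theorem~\ref{thm:main} is polynomial, so the projected and de-duplicated sequence is too.

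The main obstacle I anticipate is the exchange-step bookkeeping in the projection, specifically ruling out the bad case where a single step $T'_{i-1}\to T'_i$ in $D'$ could project to a step that changes $T_i$ by more than one arc on a side. Since $|T'_{i-1}\setminus T'_i|=|T'_i\setminus T'_{i-1}|=1$ exactly one arc leaves and one arc enters, so $T_i=T'_i\cap A$ can differ from $T_{i-1}=T'_{i-1}\cap A$ by at most one arc entering (if the entering arc is in $A$) and at most one leaving (if the leaving arc is in $A$) — hence $|T_{i-1}\setminus T_i|\le 1$ and $|T_i\setminus T_{i-1}|\le 1$ automatically, which is even slightly stronger than needed, and after removing duplicates we get exactly $1$. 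The only genuinely delicate point is verifying the "if and only if" lemma relating $\mathcal{F}_k(D)$ and $\mathcal{F}_{k,r}(D')$, for which I would invoke the standard arborescence-packing correspondence together with the observation that the newly added $k$-fold parallel arcs at $r$ give enough freedom to complete any branching packing to an $r$-arborescence packing; this is essentially a rephrasing of Edmonds' disjoint arborescences theorem applied to $D'$, and I would cite \cite{lexbook} rather than reprove it. Finally I would remark that Theorem~\ref{thm:main2}, like Theorem~\ref{thm:main}, yields a new family of matroid pairs satisfying \ref{it:rcb}, namely the union of $k$ graphic matroids paired with the direct sum of uniform matroids encoding the (now root-free) indegree constraint.
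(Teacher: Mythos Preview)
Your reduction has a genuine gap: the claimed lemma that every $F'\in\mathcal{F}_{k,r}(D')$ satisfies $F'\cap A\in\mathcal{F}_k(D)$ is false. In this paper an arborescence is \emph{spanning} (every vertex other than the root has indegree exactly~$1$), so each member of $\mathcal{F}_k(D)$ has exactly $k(|V|-1)$ arcs; your remark that members of $\mathcal{F}_k(D)$ can have different sizes is therefore mistaken, and more importantly the projection $F'\cap A$ can be too small. For a concrete failure take $V=\{a,b\}$, $A=\{ab,ba\}$, $k=2$: in your $D'$ the set $F'$ consisting of the four new arcs $ra,ra,rb,rb$ lies in $\mathcal{F}_{2,r}(D')$ (partition it as $\{ra,rb\}\cup\{ra,rb\}$), yet $F'\cap A=\emptyset\notin\mathcal{F}_2(D)$. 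The problem is that deleting the arcs incident to $r$ from an $r$-arborescence leaves a branching (a forest of arborescences), not a single spanning arborescence, unless that $r$-arborescence used exactly one arc of $A_r$. Nothing in Theorem~\ref{thm:main} prevents the intermediate sets $T'_i$ from using more than $k$ arcs of $A_r$, so the projected sequence need not stay in $\mathcal{F}_k(D)$.

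The paper's proof circumvents this by working in two phases. First it shows directly (Lemma~\ref{lem:changeroots}) that whenever the indegree profiles of $S$ and $T$ differ, one can perform a single exchange within $\mathcal{F}_k(D)$; iterating, one reaches some $T_m\in\mathcal{F}_k(D)$ with $\delta^-_{T_m}(v)=\delta^-_T(v)$ for all $v$. Only then does it pass to the extended digraph, and crucially it adds just $k-\delta^-_T(v)$ arcs from $\widehat r$ to each $v$, so the ambient digraph $\widehat{T_m}\cup\widehat{T}$ has exactly $k$ arcs leaving $\widehat r$. Every member of $\mathcal{F}_{k,\widehat r}$ in that digraph must then use all $k$ of them (each $\widehat r$-arborescence needs at least one), which forces the projection back to $A$ to lie in $\mathcal{F}_k(D)$. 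The indegree-matching phase is not a cosmetic convenience; it is precisely what makes the projection valid, and your proposal omits it.
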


When $k=1$, Theorem~\ref{thm:main} amounts to the reconfigurability of $r$-arborescences, which is an easy result; see e.g.~\cite{barahona1987exact,ito2023reconfiguring}. 
In this special case, we can update an $r$-arborescence $S$ so that $|S\setminus T|$ decreases monotonically, 
which immediately leads to the existence of a reconfiguration sequence of length $|S\setminus T|$. 
Here, the \emph{length} of a reconfiguration sequence is defined as the number of exchange operations in it. 
Meanwhile, such an update of $S$ is not always possible when $k \ge 2$, 
that is, there exists an example in which more than $|S  \setminus T|$ steps are required to 
transform $S$ into $T$;  see Example~\ref{ex:detour}. 
This suggests that the case with $k\ge 2$ is much more complicated than the case with $k=1$.

\begin{example}
\label{ex:detour}
Let $k=2$, let $D=(V, A)$ be a digraph with a root $r \in V$, and let $S, T \in \mathcal{F}_{k, r}$ be as in Figure~\ref{fig:example1}. 
Then, the shortest reconfiguration sequence between $S$ and $T$ has length $3$, while $|S \setminus T| = |T \setminus S| =2$. 
\end{example}
\begin{figure}
    \centering
    \includegraphics[width=130mm]{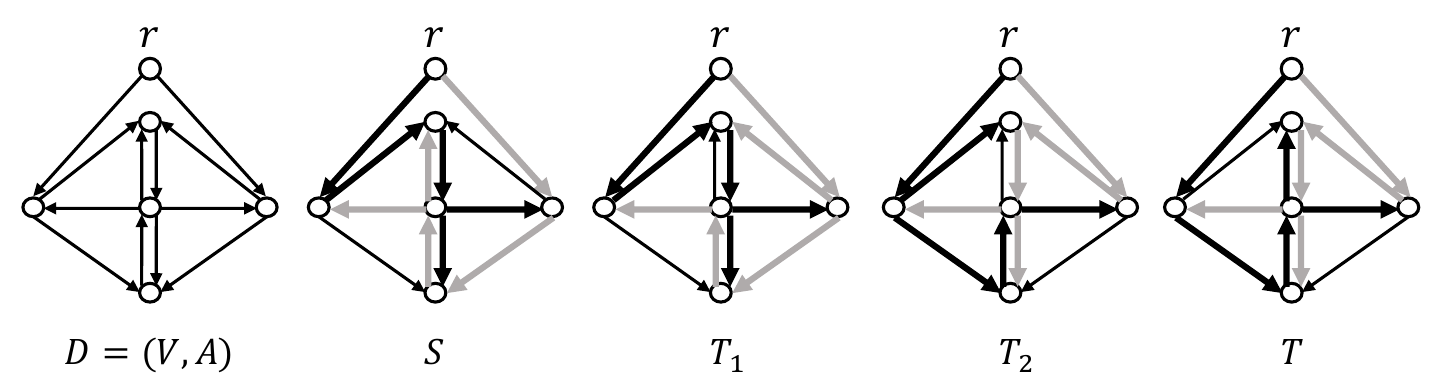}
    \caption{The leftmost figure represents a digraph $D=(V,A)$ with a root $r\in V$.
    In each figure, the union of thick black and gray arc subsets is in $\mathcal{F}_{k, r}$, where $k=2$.
    The sequence $T_0=S, T_1, T_2, T_3=T$ is a shortest reconfiguration sequence from $S$ to $T$.
    }
    \label{fig:example1}
\end{figure}

We here give another remark on the relationship between the case of $k\ge 2$ and the case of $k=1$. 
We have already mentioned that 
an $r$-arborescence is represented as a common basis of the graphic matroid $M_1$ 
and the partition matroid $M_2$ corresponding to the indegree constraint. 
Then, $\mathcal{F}_{k, r}$ is represented as a family of common bases of $k M_1$ and $k M_2$, 
where $k M_i$ is the matroid whose bases are the unions of $k$ disjoint bases of $M_i$. 
Since the matroids representing $\mathcal{F}_{k, r}$ have these special forms, to prove Theorem~\ref{thm:main}, 
it will be natural to expect that the case of $k \ge 2$ is reduced to the case of $k=1$. 
However, the following theorem suggests that such an approach does not work naively; see Section~\ref{sec:sumcounterexample} for the proof.

\begin{restatable}{theorem}{sumcounterexample}
\label{thm:sumcounterexample}
There exist matroids $M_1= (E, \mathcal{B}_1)$ and $M_2= (E, \mathcal{B}_2)$ such that 
the pair of matroids $(M_1, M_2)$ satisfies \ref{it:rcb}, while $(2 M_1, 2 M_2)$ does not satisfy it. 
\end{restatable}

In this paper, we also consider the following algorithmic problem
in addition to the reconfigurability of the union of arborescences. 

\begin{problem}
\problemtitle{RCB Testing}
\probleminput{Two matroids $M_1$ and $M_2$.}
\problemoutput{Determine whether $M_1$ and $M_2$ satisfy \ref{it:rcb}.}
\end{problem}

By constructing a hard instance, we show the hardness result for this problem; see Section~\ref{sec:MIRHardness} for the proof.

\begin{restatable}{theorem}{hardRCB}
    \label{thm:hardRCB}
    {\sc RCB Testing} requires an exponential number of independence queries
    if the matroids are given by independence oracles. 
\end{restatable}
It is worth noting that the proof of this theorem implies the hardness of 
another problem called {\sc Matroid Intersection Reconfiguration} (see Section~\ref{sec:related} for the problem description), 
resolving an open question mentioned in \cite{Bousquet2023TCS}.

\subsection{Related Work}
\label{sec:related}

The study of packing arborescences was initiated by Edmonds~\cite{edmonds1973edge}, 
who showed that a digraph $D=(V, A)$ contains $k$ arc-disjoint $r$-arborescences if and only if 
every nonempty subset of $V \setminus \{r\}$ has at least $k$ entering arcs; see Theorem~\ref{thm:edmonds}. 
Lov\'asz~\cite{Lovasz1976} gave a simpler proof for this theorem. 
There are several directions of extension of Edmonds' theorem. 
The first one by Frank~\cite{Frank1978mixed} is to extend directed graphs to mixed graphs. 
Second, Frank, Kir\'aly, and Kir\'aly~\cite{FrankKK2003} extended directed graphs to directed hypergraphs. 
Third, Kamiyama, Katoh, and Takizawa~\cite{KamiyamaKT09} and Fujishige~\cite{Fujishige10}
extended the problem to the packing of rooted-trees that cover only reachable vertex set. 
Lastly, de Gevigney, Nguyen, and Szigeti~\cite{GNZ13} considered the packing of rooted-trees with matroid constraints.
By combining these extensions, we can consider further generalizations, 
which have been actively studied in~\cite{BF2008,FORTIER2018,GAO2021313,Csaba2016,MATSUOKA20191}.

{Combinatorial reconfiguration} is an emerging field in discrete mathematics and theoretical computer science. 
One of the central problems in combinatorial reconfiguration is the following algorithmic question; 
for two given discrete structures, determine whether one can be transformed into the other by a sequence of local changes. 
See surveys of Nishimura~\cite{DBLP:journals/algorithms/Nishimura18} and van den Heuvel \cite{DBLP:books/cu/p/Heuvel13}, 
and see also solvers for combinatorial reconfiguration~\cite{ito2023zddbased}.

In this framework, if we focus on common bases of two matroids, then we can consider the following reconfiguration problem. 
\begin{problem}
\problemtitle{Matroid Intersection Reconfiguration}
\probleminput{Two matroids $M_1$ and $M_2$ and their common bases $B$ and $B'$.}
\problemoutput{Determine whether $B$ can be transformed into $B'$ by exchanging a pair of elements repeatedly 
                so that all the intermediate sets are common bases of $M_1$ and $M_2$. }
\end{problem}
Although this problem seems to be a fundamental problem, its polynomial solvability was previously unknown (see \cite[Question 4]{Bousquet2023TCS}). 
If the pair of input matroids $M_1$ and $M_2$ satisfies~\ref{it:rcb}, then {\sc Matroid Intersection Reconfiguration} is trivial, i.e., the transformation is always possible. 
If each common basis corresponds to a maximum matching in a bipartite graph, then 
the pair of $M_1$ and $M_2$ does not necessarily satisfy~\ref{it:rcb}, but {\sc Matroid Intersection Reconfiguration} can be solved in polynomial time~\cite{DBLP:journals/tcs/ItoDHPSUU11}. 
Actually, it is shown in~\cite{DBLP:journals/tcs/ItoDHPSUU11} that the reconfiguration problem of maximum matchings in non-bipartite graphs 
is also solvable in polynomial time. 

Since the spanning trees in a graph form a matroid basis family, the reconfiguration problem on spanning trees is trivial, i.e., the transformation is always possible.  
However, the problem becomes non-trivial if we add some constraints. 
Reconfiguration problems on spanning trees with additional constraints were studied in~\cite{Bousquet2020,Bousquet2023Algo}.

\subsection{Overview}

We now describe an outline of the proof of Theorem~\ref{thm:main}.  
For feasible arc subsets $S$ and $T$, in order to show that $S$ can be transformed into $T$, 
it suffices to show that $S$ can be transformed into another feasible arc subset $S'$ such that $|S' \setminus T| = |S \setminus T|-1$. 
When $k=1$, such $S'$ can be easily obtained from $S$ by exchanging only one pair of arcs; see~\cite{barahona1987exact,ito2023reconfiguring}. 
However, this is not indeed the case when $k \ge 2$ as shown in Example~\ref{ex:detour}, that is, 
several steps may be required to obtain $S'$. This is the main technical difficulty of the problem. 

To overcome this difficulty, we introduce and use \emph{minimal tight sets} and an \emph{auxiliary digraph}. 
Let $T \setminus S = \{f_1, \dots , f_p\}$. 
For each arc $f_i \in T \setminus S$, we consider the inclusionwise minimal vertex set $X_i$ 
subject to $X_i$ contains $f_i$ and exactly $k$ arcs in $S$ enter $X_i$ (i.e., $X_i$ is \emph{tight}).
Then, $X_i$ gives a characterization of arcs $e \in S$ that can be replaced with $f_i$; see Lemma~\ref{lem:01}.
By using $X_1, \dots , X_p$, we construct an auxiliary digraph $H$, whose definition is given in Section~\ref{sec:auxiliary}.  
Roughly speaking, $H$ is similar to the exchangeability digraph when $\mathcal{F}_{k,r}$ is represented as the common bases of two matroids. 
Then, we can show that $H$ has a dicycle; see Lemma~\ref{lem:04}. 
If $H$ has a self-loop, then we can obtain a desired arc subset $S'$ by exchanging only one pair of arcs. 
Otherwise, we update the arc set $S$ so that the length of the shortest dicycle in $H$ becomes shorter, which is discussed in Section~\ref{sec:shortestdicycle}. 
By repeating this procedure, we obtain a desired arc subset $S'$ in finite steps.

The rest of the paper is organized as follows. 
In Section~\ref{sec:preliminaries}, we introduce some notation and show basic results on arborescence packing. 
In Section~\ref{sec:proofmain}, we give a proof of Theorem~\ref{thm:main}, which is the main part of the paper. 
The upper bound on the length of a shortest reconfiguration sequence is discussed in Section~\ref{sec:upperbound}. 
In Section~\ref{sec:distinctroot}, we show the analogue of Theorem~\ref{thm:main} in the case when arborescences may have distinct roots and give a proof of Theorem~\ref{thm:main2}. 
In Section~\ref{sec:sumcounterexample}, we show that \ref{it:rcb} is not closed under sums (Theorem~\ref{thm:sumcounterexample}) by giving a concrete example. 
In Section~\ref{sec:exactpoly}, as an application of Theorem~\ref{thm:main2}, we present a polynomial-time algorithm 
for the so-called exact arborescence packing problem. 
Hardness of {\sc RCB Testing} and {\sc Matroid Intersection Reconfiguration} is discussed in Section~\ref{sec:MIRHardness}. 
Finally, in Section~\ref{sec:conclusion}, we conclude this paper by giving some remarks.

\section{Preliminaries}
\label{sec:preliminaries}

Let $D=(V, A)$ be a digraph that may have parallel arcs.   
For an arc $e \in A$, the head and tail of $e$ are denoted by ${\rm head} (e)$ and  ${\rm tail}(e)$, respectively. 
For $e \in A$ and $X \subseteq V$, we say that $X$ \emph{contains} $e$ if $X$ contains both ${\rm head} (e)$ and  ${\rm tail}(e)$.  
For $F \subseteq A$ and $X, Y \subseteq V$, let $\Delta_F(X, Y)$ denote the set of arcs in $F$ from $X$ to $Y$, 
i.e., $\Delta_F(X, Y) = \{ e \in F \mid {\rm tail} (e) \in X,\ {\rm head} (e) \in Y\}$. 
Let $\Delta^-_F(X)$ denote $\Delta_F(V \setminus X, X)$. 
Let $\delta_F(X, Y) = |\Delta_F(X, Y)|$ and $\delta^-_F(X) = |\Delta^-_F(X)|$. 
For $v \in V$, $\delta^-_F(\{v\})$ is simply denoted by $\delta^-_F(v)$.  

For a digraph $D=(V, A)$ with a specified vertex $r \in V$ called a \emph{root}, 
an \emph{$r$-arborescence} is an acyclic arc subset $T \subseteq A$ such that 
$\delta^-_T(v) = 1$ for $v \in V \setminus \{r\}$ and $\delta^-_T(r) = 0$. 
Note that an arborescence is often defined as a subgraph of $D$ in the literature, but it is regarded as an arc subset in this paper. 
For a positive integer $k$, 
let $\mathcal{F}_{k, r} \subseteq 2^A$ denote the family of all arc subsets that 
can be partitioned into $k$ arc-disjoint $r$-arborescences.
If $r$ and $k$ are clear, an arc subset in $\mathcal{F}_{k, r}$ is simply called \emph{feasible}. 
Note that for any feasible arc subsets $S$ and $T$, $|S|=|T|$ holds.
Edmonds~\cite{edmonds1973edge} gave the following characterization of feasible arc subsets. 
\begin{theorem}[\mbox{Edmonds~\cite{edmonds1973edge}}]
\label{thm:edmonds}
 For a digraph $D=(V, A)$ with $r \in V$, an arc subset $T \subseteq A$ is in $\mathcal{F}_{k, r}$ if and only if 
$\delta^-_T(v) = k$ for any $v \in V \setminus \{r\}$, $\delta^-_T(r) =0$, and  $\delta^-_T(X) \ge k$ for any $X \subseteq V \setminus \{r\}$ with $X \neq \emptyset$. 
\end{theorem}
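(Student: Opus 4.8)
The plan is to prove Theorem~\ref{thm:edmonds} using a standard splitting-off / uncrossing argument. First I would establish the easy direction: if $T$ can be partitioned into $k$ arc-disjoint $r$-arborescences $T_1, \dots, T_k$, then each $T_j$ contributes exactly one arc to the indegree of every $v \in V \setminus \{r\}$ and zero arcs to $r$, so $\delta^-_T(v) = k$ and $\delta^-_T(r) = 0$; moreover each $T_j$ enters every non-empty $X \subseteq V \setminus \{r\}$ at least once (since $T_j$ reaches every vertex of $X$ from $r \notin X$), hence $\delta^-_T(X) \ge k$. So it suffices to concentrate on the reverse direction.

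For the reverse direction, suppose $T \subseteq A$ satisfies $\delta^-_T(v) = k$ for all $v \ne r$, $\delta^-_T(r) = 0$, and $\delta^-_T(X) \ge k$ for every non-empty $X \subseteq V \setminus \{r\}$. I would argue by induction, peeling off one $r$-arborescence at a time, so the crux is: under these hypotheses there exists an $r$-arborescence $T_1 \subseteq T$ such that $T \setminus T_1$ still satisfies the same conditions with $k$ replaced by $k-1$. To find $T_1$, one builds it greedily/via a matroid-intersection style argument: we want to pick, for each $v \ne r$, exactly one incoming arc of $T$ so that the chosen set is acyclic and every non-empty $X \subseteq V \setminus \{r\}$ receives at least one chosen arc. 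The key structural tool is uncrossing: if $X$ and $Y$ are both \emph{tight} (meaning $\delta^-_T(X) = k$, and more generally tight for the current partial selection), then submodularity of $\delta^-_T$, namely $\delta^-_T(X) + \delta^-_T(Y) \ge \delta^-_T(X \cap Y) + \delta^-_T(X \cup Y)$, forces $X \cap Y$ (if non-empty) and $X \cup Y$ to be tight as well, and controls the cross arcs $\delta_T(X \setminus Y, Y \setminus X)$. This lets one show that a suitably chosen arc can always be added to the partial arborescence without violating the covering condition for the remaining $k-1$ layers.

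Concretely, I would follow Lov\'asz's proof: order the vertices and process them, maintaining a partial arborescence $F$ rooted at $r$ together with the invariant that $\delta^-_{T \setminus F}(X) \ge k-1$ for every non-empty $X \subseteq V \setminus \{r\}$ not yet "completed"; at each step pick an arc $e \in \Delta^-_T(U)$ (where $U$ is the set of vertices already reached) entering a new vertex, chosen so that no set becomes deficient — existence of such $e$ is exactly where the uncrossing of the family of tight sets is invoked, using that a minimal tight set containing a given vertex behaves well under intersection. Once $F$ is a full $r$-arborescence $T_1$, the residual $T \setminus T_1$ has $\delta^-(v) = k-1$ for $v \ne r$, $\delta^-(r) = 0$, and $\delta^-(X) \ge k-1$ for all non-empty $X \subseteq V \setminus \{r\}$, so the induction hypothesis applies and we are done. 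The main obstacle is the inductive step of extracting one arborescence while preserving the cut condition for the remainder; everything hinges on the submodular-uncrossing lemma guaranteeing that a non-deficiency-creating arc always exists. Since this is a classical theorem, in the paper I would simply cite~\cite{edmonds1973edge} (or give Lov\'asz's short proof~\cite{Lovasz1976}) rather than reproduce the full argument.
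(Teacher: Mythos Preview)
The paper does not prove Theorem~\ref{thm:edmonds} at all; it is stated as Edmonds' classical result and simply cited to~\cite{edmonds1973edge}. Your final sentence---just cite the theorem (or Lov\'asz's short proof)---is therefore exactly what the paper does, and your sketch of the Lov\'asz peeling argument is correct in outline (one small slip: when growing the partial arborescence from the reached set $U \ni r$, you add an arc in $\Delta^-_T(V\setminus U)$, i.e.\ leaving $U$, not one in $\Delta^-_T(U)$).
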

For a feasible arc subset $T \subseteq A$, 
we say that a vertex set $X \subseteq V \setminus \{r\}$ is \emph{tight with respect to $T$} if $\delta^-_{T}(X) = k$. 
It is well-known that the tight sets are closed under intersection and union as follows. 

\begin{lemma}
\label{lem:tight}
Let $T\subseteq A$ be a feasible arc set and let $X, Y \subseteq V \setminus \{r\}$ be tight sets with respect to $T$ with $X \cap Y \neq \emptyset$. 
Then, $X \cap Y$ and $X \cup Y$ are tight sets with respect to $T$. 
Furthermore, $T$ has no arc connecting $X \setminus Y$ and $Y \setminus X$. 
\end{lemma}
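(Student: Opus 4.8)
The plan is to prove the submodularity-type statement for indegrees and then extract the structural consequences. The key inequality is the standard one: for any arc set $T$ and any $X, Y \subseteq V$,
\[
\delta^-_T(X) + \delta^-_T(Y) = \delta^-_T(X \cap Y) + \delta^-_T(X \cup Y) + \delta_T(X \setminus Y, Y \setminus X) + \delta_T(Y \setminus X, X \setminus Y).
\]
This is verified by a routine case analysis: each arc $e \in T$ contributes to the left-hand side according to which of $X$, $Y$ contains its head but not its tail, and one checks that its contribution to the right-hand side is the same in every case (the cases being determined by the position of $\mathrm{head}(e)$ and $\mathrm{tail}(e)$ relative to the sets $X \setminus Y$, $Y \setminus X$, $X \cap Y$, and $V \setminus (X \cup Y)$). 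I would state this as a preliminary computation or fold it inline.

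Next I would apply Theorem~\ref{thm:edmonds} (Edmonds' characterization). Since $T$ is feasible and $X, Y \subseteq V \setminus \{r\}$ with $X \cap Y \neq \emptyset$, both $X \cap Y$ and $X \cup Y$ are non-empty subsets of $V \setminus \{r\}$, so $\delta^-_T(X \cap Y) \ge k$ and $\delta^-_T(X \cup Y) \ge k$. On the other hand $X$ and $Y$ are tight, so the left-hand side equals $2k$. Plugging into the identity, we get
\[
2k = \delta^-_T(X \cap Y) + \delta^-_T(X \cup Y) + \delta_T(X \setminus Y, Y \setminus X) + \delta_T(Y \setminus X, X \setminus Y) \ge k + k + 0 + 0 = 2k.
\]
Hence equality holds throughout: $\delta^-_T(X \cap Y) = \delta^-_T(X \cup Y) = k$, which says $X \cap Y$ and $X \cup Y$ are tight, and $\delta_T(X \setminus Y, Y \setminus X) = \delta_T(Y \setminus X, X \setminus Y) = 0$, which is exactly the statement that $T$ has no arc connecting $X \setminus Y$ and $Y \setminus X$.

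I expect no serious obstacle here; the only thing requiring care is the case analysis establishing the submodular identity, which is purely mechanical. One minor point worth being explicit about: the cross terms $\delta_T(X\setminus Y, Y\setminus X)$ and $\delta_T(Y\setminus X, X\setminus Y)$ must both appear in the identity (not just one of them), since an arc from $X \setminus Y$ to $Y \setminus X$ is counted in $\delta^-_T(X)$ but not in $\delta^-_T(Y)$ nor in $\delta^-_T(X \cap Y)$ nor in $\delta^-_T(X \cup Y)$, and symmetrically for the reverse direction; tracking both directions is what makes the "no arc connecting" conclusion come out symmetric. I would keep the write-up short: state the identity, cite Edmonds for the two $\ge k$ bounds, and observe that tightness of $X$ and $Y$ forces equality everywhere.
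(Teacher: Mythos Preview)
Your proof is correct and essentially identical to the paper's: both state the submodular identity for $\delta^-_T$, use feasibility of $T$ to get $\delta^-_T(X\cap Y),\,\delta^-_T(X\cup Y)\ge k$, and conclude equality throughout. One small slip in your explanatory aside: an arc from $X\setminus Y$ to $Y\setminus X$ has its head in $Y\setminus X$, so it contributes to $\delta^-_T(Y)$, not to $\delta^-_T(X)$ as you wrote (you have these swapped), but this does not affect the actual argument.
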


\begin{proof}
By a simple counting argument, we obtain 
\begin{align*}
k + k 
&= \delta^-_T (X) + \delta^-_T (Y) \\
&= \delta^-_T (X \cap Y) + \delta^-_T (X \cup Y)  +  \delta_T(X \setminus Y, Y \setminus X) + \delta_T(Y \setminus X, X \setminus Y) \\ 
&\ge k + k + 0 + 0, 
\end{align*}
which shows that 
$\delta^-_T (X \cap Y) = \delta^-_T (X \cup Y) = k$ and 
$\delta_T(X \setminus Y, Y \setminus X) = \delta_T(Y \setminus X, X \setminus Y) = 0$. 
This means that $X \cap Y$ and $X \cup Y$ are tight and $T$ has no arc connecting $X \setminus Y$ and $Y \setminus X$. 
\end{proof}

For a positive integer $p$, let $[p] = \{1, 2, \dots , p\}$. 
For feasible arc subsets $S, T \in \mathcal{F}_{k, r}$,  
we say that $T_0, T_1, \dots , T_{\ell}$ is a \emph{reconfiguration sequence between $S$ and $T$} 
if $T_0 = S$, $T_{\ell} = T$, $T_i \in \mathcal{F}_{k, r}$ for any $i \in [\ell] \cup \{0\}$, and $|T_{i-1} \setminus T_i| = |T_{i} \setminus T_{i-1}| = 1$ for any $i \in [\ell]$. 
We call $\ell$ the \emph{length} of the sequence. 
With this terminology, Theorem~\ref{thm:main} is rephrased as follows: 
for any $S, T \in \mathcal{F}_{k, r}$, there always exists a reconfiguration sequence between $S$ and $T$.

We denote a matroid on ground set $E$ with family of bases $\mathcal{B}$ by $M=(E,\mathcal{B})$. 
See~\cite{oxley2011matroid} for the definition and basic properties of matroids. A rank-$r$ matroid is called \emph{paving} if it has no circuits of size less than $r$. A paving matroid is \emph{sparse paving} if its dual is also paving. Equivalently, a sparse paving matroid is a rank-$r$ matroid in which every set of size $r$ is either a basis or a circuit-hyperplane. The following technical statement is well-known, see e.g.~\cite{frank2011connections}.
\begin{lemma} \label{lem:paving}
    Let $r$ be a nonnegative integer and $E$ a set of size at least $r$. Let $\mathcal{H}$ be a (possibly empty) family of size-$r$ subsets of $E$ such that $|H\cap H'| \le r-2$ holds for each $H, H' \in \mathcal{H}$, $H \ne H'$. Then $\mathcal{B}_\mathcal{H} = \{B \subseteq E \mid |B|=r, B \not \in \mathcal{H}\}$ forms the family of bases of a sparse paving matroid. Moreover, every sparse paving matroid can be obtained in this form.  
\end{lemma}

\section{Proof of Theorem~\ref{thm:main}}
\label{sec:proofmain}

In this section, we prove Theorem~\ref{thm:main}.
For a digraph $D=(V,A)$ with a root $r$, let $S,T\in \mathcal{F}_{k, r}$ be feasible arc subsets.
Let $S \setminus T = \{e_1, \dots , e_p\}$ and  $T \setminus S = \{f_1, \dots , f_p\}$, where  $p = |S \setminus T|$.  
By changing the indices if necessary, we may assume that ${\rm head}(e_i) = {\rm head}(f_i)$ for any $i \in [p]$. 
To prove Theorem~\ref{thm:main}, it suffices to show that we can transform $S$ to a new feasible arc subset $S'$ such that $|S' \setminus T| = p-1$, 
which is formally stated as follows. 

\begin{proposition}
\label{prop:reducedifference}
Let $S \subseteq A$ and $T \subseteq A$ be feasible arc subsets with $|S \setminus T| = p$. 
Then, there is a new feasible arc subset $S' \subseteq A$ such that  
$|S' \setminus T| = p-1$ and 
there is a reconfiguration sequence between $S$ and $S'$. 
\end{proposition}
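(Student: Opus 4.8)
The plan is to set up an auxiliary ``exchangeability'' digraph that plays the role of the exchange graph in matroid intersection, and to push $S$ towards $T$ by a bounded number of feasible exchanges, using the length of a shortest dicycle of this digraph as a progress measure. By induction on $p$, Proposition~\ref{prop:reducedifference} yields Theorem~\ref{thm:main}: it reduces $|S\setminus T|$ by one, the intermediate sets of the reconfiguration sequence between $S$ and $S'$ are feasible by construction, and concatenating the sequences gives a full reconfiguration from $S$ to $T$.

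First, for each $f_i\in T\setminus S$ I would single out the inclusionwise minimal tight set $X_i$ with respect to $S$ that witnesses $f_i$ (the smallest $X\subseteq V\setminus\{r\}$ with $\delta^-_S(X)=k$ that ``sees'' $f_i$); since tight sets are closed under intersection by Lemma~\ref{lem:tight}, $X_i$ is well defined. The role of $X_i$ is to furnish a clean local criterion: an arc $e\in S$ can be swapped for $f_i$, i.e.\ $S-e+f_i\in\mathcal{F}_{k,r}$, exactly when $e$ enters $X_i$ and ${\rm head}(e)={\rm head}(f_i)$ — this is the content of Lemma~\ref{lem:01}, which in turn rests on Theorem~\ref{thm:edmonds} together with Lemma~\ref{lem:tight}. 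In particular, if $e_i$ enters $X_i$ for some $i$, then $S':=S-e_i+f_i$ is feasible and $|S'\setminus T|=p-1$, and we are done after a single exchange.

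Next I would build the auxiliary digraph $H$ on vertex set $[p]$ (Section~\ref{sec:auxiliary}) whose arcs record these one-step exchanges through the sets $X_i$ and the indegree constraints ${\rm head}(e_i)={\rm head}(f_i)$, so that a loop at $i$ is precisely the direct exchange $e_i\leftrightarrow f_i$ above. The first nontrivial step is Lemma~\ref{lem:04}: $H$ always contains a dicycle. I expect this to follow from a matroid-intersection style counting/rank argument — since $S$ and $T$ are common bases of the same two matroids, the exchanges coming from the union of $k$ graphic matroids and those coming from the indegree (partition) matroid each yield a perfect matching on $[p]$, and superimposing the two matchings on the common vertex set produces a union of dicycles, a loop corresponding to a common fixed index. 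If $H$ has a loop we are finished by the previous paragraph.

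The remaining, and main, case is when $H$ has no loop; then its shortest dicycle has length $\ell\ge 2$, and the heart of the proof is to perform one feasible exchange on $S$ — swapping an arc of $S$ on the cycle for a suitable other arc, chosen so that the result is again feasible and still at distance $p$ from $T$ (so the families $\{f_i\}$ and $\{X_i\}$ update in a controlled way) — after which the shortest dicycle of the updated auxiliary digraph is strictly shorter (Section~\ref{sec:shortestdicycle}). Iterating this at most $\ell\le p$ times creates a loop, and one last exchange then produces the desired $S'$; since every step is an explicit exchange, the resulting subsequence from $S$ to $S'$ has polynomial length and is computable in polynomial time, which also yields the ``furthermore'' part. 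The delicate point — and what I expect to be the real obstacle — is exactly this shortening move: proving that such an exchange exists, that no intermediate set ever violates the covering inequalities $\delta^-(X)\ge k$ of Theorem~\ref{thm:edmonds}, and that the shortest-dicycle length genuinely drops. This is where $k\ge 2$ departs from $k=1$: as Example~\ref{ex:detour} shows, $|S\setminus T|$ cannot serve as the progress measure, and one is forced to argue with the finer invariant supplied by the auxiliary digraph.
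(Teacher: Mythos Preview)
Your overall plan --- minimal tight sets $X_i$, an auxiliary digraph $H$ on $[p]$, a self-loop giving a one-step win, and otherwise iterating a feasible exchange that strictly shortens the shortest dicycle of $H$ --- is exactly the paper's argument. But one concrete statement is backwards and would derail everything if used as written. You say $S-e+f_i\in\mathcal{F}_{k,r}$ ``exactly when $e$ enters $X_i$ and ${\rm head}(e)={\rm head}(f_i)$'', and then ``if $e_i$ enters $X_i$ \dots\ we are done''. In fact if $e$ \emph{enters} $X_i$ (i.e.\ ${\rm tail}(e)\notin X_i$, ${\rm head}(e)\in X_i$) then $e\in\Delta^-_S(X_i)$; since $\delta^-_S(X_i)=k$ and $f_i$ lies inside $X_i$, we get $\delta^-_{S-e+f_i}(X_i)=k-1$, so $S-e+f_i$ is \emph{infeasible}. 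The correct criterion in Lemma~\ref{lem:01} is that $e$ be \emph{contained in} $X_i$ (both endpoints in $X_i$), and accordingly $H$ has an arc $(i,j)$ when $X_i$ contains $e_j$; a self-loop at $i$ means $e_i$ is contained in $X_i$, not that it enters it. With this fix your self-loop case reads correctly.

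Your sketch for Lemma~\ref{lem:04} via superimposing two perfect matchings is the standard matroid-intersection exchange-graph picture, but $H$ is not that graph: an arc $(i,j)$ of $H$ records only the set-theoretic containment $e_j\subseteq X_i$ and need not correspond to a single-matroid exchange (in general ${\rm head}(e_j)\ne{\rm head}(f_i)$), so the matching argument does not apply. The paper's proof is more direct: from $\delta^-_S(X_i)=k\le\delta^-_T(X_i)$ together with the fact that $f_i$ is contained in $X_i$ while $e_i$ enters $X_i$, a count forces some $e_j$ to lie inside $X_i$, so every vertex of $H$ has an outgoing arc. As you correctly anticipate, the real work is the shortening move; the ingredient you are missing there is Lemma~\ref{lem:03}: after swapping some $f'_1\in S\cap T$ inside $X_1$ for $f_1$, each $X_i$ either stays unchanged or (in its old or new version) contains any prescribed arc $e\subseteq X_1$. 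Applied with $e=e_2$, this trichotomy is precisely what manufactures a shortcut arc $(i,2)$ in the new $H'$ along the old shortest cycle.
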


In what follows in this section, we give a proof of Proposition~\ref{prop:reducedifference} and prove Theorem~\ref{thm:main}. 
In Section~\ref{sec:keylemma}, we introduce a \emph{minimal tight set} $X_i$ for each $i \in [p]$ and 
show some properties of $X_i$. 
In Section~\ref{sec:auxiliary}, we construct an \emph{auxiliary digraph} using $X_i$ and show its properties. 
In Section~\ref{sec:shortestdicycle}, 
we show that $S$ can be modified so that the shortest dicycle length in the auxiliary digraph becomes shorter
until the desired $S'$ is found. 
Finally, we prove Proposition~\ref{prop:reducedifference} and Theorem~\ref{thm:main} in Section~\ref{sec:together}.

\subsection{Minimal Tight Sets}
\label{sec:keylemma}

For $i \in [p]$, define $X_i \subseteq V$ as the inclusionwise minimal tight vertex set with respect to $S$ that contains $f_i$. 
For notational convenience, define $X_i = V$ if no such tight set exists. 
Note that such $X_i$ is uniquely defined 
since tight sets are closed under intersection; see Lemma~\ref{lem:tight}. 
\begin{example}
\label{ex:Xi}
    Let $k$, $D=(V, A)$, $S$, and $T$ be as in Example~\ref{ex:detour}. 
    Then, $X_1$ and $X_2$ are as shown in Figure~\ref{fig:figure7}. 
\end{example}
We show some properties of $X_i$.

\begin{lemma}\label{lem:01}
Let $i \in [p]$. If $e \in S$ satisfies ${\rm head}(e) = {\rm head}(f_i)$ and $e$ is contained in $X_i$, 
then $S' = S - e + f_i$ is feasible. 
\end{lemma}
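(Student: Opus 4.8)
\textbf{Proof plan for Lemma~\ref{lem:01}.}

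The plan is to verify directly that $S' = S - e + f_i$ satisfies the three conditions of Edmonds' theorem (Theorem~\ref{thm:edmonds}): namely $\delta^-_{S'}(v) = k$ for every $v \in V \setminus \{r\}$, $\delta^-_{S'}(r) = 0$, and $\delta^-_{S'}(Z) \ge k$ for every nonempty $Z \subseteq V \setminus \{r\}$. The indegree conditions are immediate: since ${\rm head}(e) = {\rm head}(f_i)$, removing $e$ and adding $f_i$ leaves every vertex's indegree unchanged, so $\delta^-_{S'}(v) = \delta^-_S(v)$ for all $v \in V$; the root condition also follows because $f_i \in T$ has ${\rm head}(f_i) \ne r$ (as $T$ is feasible), so no new arc enters $r$. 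The only real content is the cut condition, which is where I expect the main obstacle to lie.

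For the cut condition, fix a nonempty $Z \subseteq V \setminus \{r\}$ and compare $\delta^-_{S'}(Z)$ with $\delta^-_S(Z) \ge k$. The quantity can only decrease if $e \in \Delta^-_S(Z)$ while $f_i \notin \Delta^-_{S}(Z) \cup \{e\}$ properly — more precisely, $\delta^-_{S'}(Z) = \delta^-_S(Z) - [e \in \Delta^-_S(Z)] + [f_i \in \Delta^-_{S'}(Z)]$, and since ${\rm head}(e) = {\rm head}(f_i)$, the indicator $[e \in \Delta^-_S(Z)]$ differs from $[f_i \text{ enters } Z]$ only when ${\rm tail}(e)$ and ${\rm tail}(f_i)$ lie on opposite sides of $Z$. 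So a decrease happens exactly when ${\rm head}(e) = {\rm head}(f_i) \in Z$, ${\rm tail}(e) \notin Z$, and ${\rm tail}(f_i) \in Z$; in that case $\delta^-_{S'}(Z) = \delta^-_S(Z) - 1$, and to derive a contradiction I must show $\delta^-_S(Z) \ge k+1$, i.e., that $Z$ is not tight with respect to $S$. Suppose for contradiction $Z$ is tight. The hypotheses say $e$ is contained in $X_i$, so ${\rm tail}(e), {\rm head}(e) \in X_i$; combined with ${\rm tail}(e) \notin Z$ this gives $X_i \not\subseteq Z$, hence $X_i \setminus Z \ne \emptyset$. Also ${\rm head}(f_i) \in Z$ and ${\rm head}(f_i) \in X_i$ (since $X_i$ contains $f_i$), so $X_i \cap Z \ne \emptyset$. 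Now apply Lemma~\ref{lem:tight} to the tight sets $X_i$ and $Z$: their intersection $X_i \cap Z$ is tight with respect to $S$, and $S$ has no arc connecting $X_i \setminus Z$ and $Z \setminus X_i$. But $X_i \cap Z$ is a tight set containing $f_i$ (both endpoints of $f_i$ lie in $X_i$, and ${\rm head}(f_i) \in Z$; for ${\rm tail}(f_i)$ we have it in $Z$ by our case assumption and in $X_i$ since $X_i$ contains $f_i$), and $X_i \cap Z \subsetneq X_i$ because $X_i \setminus Z \ne \emptyset$ — contradicting the minimality of $X_i$ among tight sets containing $f_i$.

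The one subtlety to handle carefully is the degenerate possibility $X_i = V$ (when no tight set containing $f_i$ exists): in that case $X_i \setminus Z \ne \emptyset$ trivially since $r \in V \setminus Z$, and the argument above still produces a tight set $X_i \cap Z = Z \subsetneq V$ containing $f_i$, contradicting the assumption that no such tight set exists. So in all cases the cut condition $\delta^-_{S'}(Z) \ge k$ holds, and Edmonds' theorem yields $S' \in \mathcal{F}_{k,r}$. I expect the bookkeeping of exactly which cuts can lose an arc, and the verification that $f_i$ is genuinely contained in $X_i \cap Z$ in the bad case, to be the only place demanding care; everything else is routine.
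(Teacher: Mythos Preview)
Your proof is correct and follows essentially the same approach as the paper's: both argue via Edmonds' theorem, note the indegree conditions are trivially preserved, and then show that any cut $Z$ violating $\delta^-_{S'}(Z) \ge k$ would be a tight set with respect to $S$ containing $f_i$, so that $X_i \cap Z$ is a strictly smaller tight set containing $f_i$, contradicting the minimality of $X_i$. Your treatment is slightly more explicit (e.g.\ you separately handle the degenerate case $X_i = V$, which the paper leaves implicit), but the logic is the same.
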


\begin{proof}
Since 
$\delta^-_{S'}(v) = k$ for any $v \in V \setminus \{r\}$ and  $\delta^-_{S'}(r) =0$, by Theorem~\ref{thm:edmonds}, 
it suffices to show that 
$\delta^-_{S'} (X) \ge k$ holds for any $X \subseteq V \setminus \{r\}$ with $X \neq \emptyset$. 

Assume to the contrary that there exists a nonempty subset $X$ of  $V \setminus \{r\}$ with $\delta^-_{S'} (X) < k$. 
Since $\delta^-_{S} (X) \ge k$, we obtain  $\delta^-_{S} (X) = k$, $e \in \Delta^-_{S} (X)$, and $f_i \not\in \Delta^-_{A} (X)$. 
Hence, ${\rm head}(e) = {\rm head}(f_i) \in X$, ${\rm tail}(e) \not\in X$, and ${\rm tail}(f_i) \in X$ (Figure~\ref{fig:figure2}). 
This shows that $X$ is a tight set with respect to $S$ containing $f_i$. 
By Lemma~\ref{lem:tight}, $Y := X \cap X_i$ is also a tight set with respect to $S$ containing $f_i$, 
which contradicts the minimality of $X_i$ as ${\rm tail}(e) \in X_i \setminus Y$. 
\end{proof}

\begin{figure}[tbp]
 \begin{tabular}{cc}
 \begin{minipage}[t]{0.5\hsize}
  \begin{center}
   \includegraphics[width=40mm]{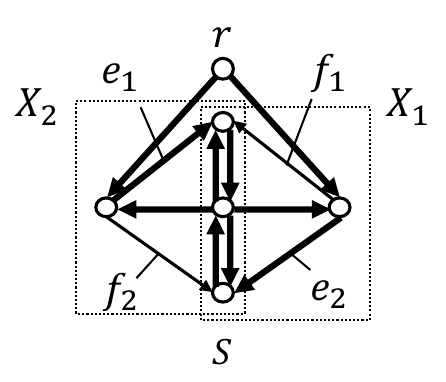}
   \caption{Minimal tight sets $X_i$.}
  \label{fig:figure7}
   \end{center}
 \end{minipage}
 \hfill
 \begin{minipage}[t]{0.5\hsize}
   \begin{center}
   \includegraphics[width=35mm]{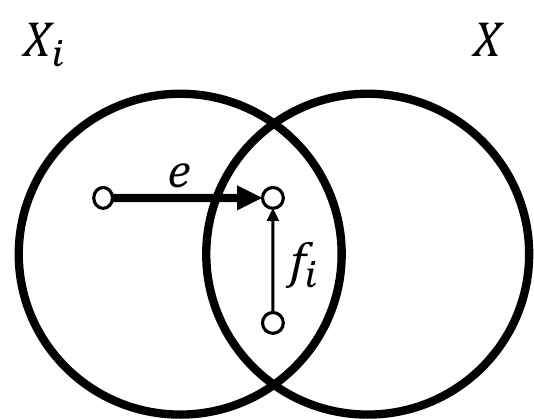}
   \caption{Proof of Lemma~\ref{lem:01}.}
     \label{fig:figure2}
    \end{center}
 \end{minipage}\\
 \end{tabular}
\end{figure}

Suppose that $f'_1 \in S \cap T$ is an arc such that  
${\rm head}(f'_1) = {\rm head}(f_1)$ and $f'_1$ is contained in $X_1$. 
By Lemma~\ref{lem:01}, $S' := S - f'_1 + f_1$ is feasible. 
For each $i \in [p]$, define $X'_i \subseteq V$ as the counterpart of $X_i$ associated with $S'$, i.e.,  
define $X'_i$ as the inclusionwise minimal tight vertex set with respect to $S'$ that contains $f_i$ ($f'_i$ if $i=1$), 
and $X'_i = V$ if no such set exists.

\begin{lemma}\label{lem:02}
Let $f'_1$ and $X'_1$ be as above. Then, it holds that $X'_1 = X_1$. 
\end{lemma}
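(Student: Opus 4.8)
The goal is to show that replacing $f'_1$ by $f_1$ does not change the minimal tight set through $f_1$: $X'_1 = X_1$. The natural strategy is to prove the two inclusions $X'_1 \subseteq X_1$ and $X_1 \subseteq X'_1$ separately, in each case by exhibiting that one of the two sets is a tight set with respect to the other arc set and contains the relevant arc, then invoking minimality. The key observation to keep in mind throughout is that $S$ and $S'$ differ only in the single arc pair: $S' = S - f'_1 + f_1$, where both $f'_1$ and $f_1$ have head ${\rm head}(f_1)$, and $f'_1$ is contained in $X_1$ (so ${\rm tail}(f'_1) \in X_1$ as well), while $f_1$ is contained in $X_1$ by definition of $X_1$.

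\textbf{Step 1: $X_1$ is tight with respect to $S'$.} Since $X_1$ is tight with respect to $S$, we have $\delta^-_S(X_1) = k$. When we pass from $S$ to $S'$, we remove $f'_1$ and add $f_1$; but $f'_1$ is contained in $X_1$, so $f'_1 \notin \Delta^-_S(X_1)$, and likewise $f_1$ is contained in $X_1$, so $f_1 \notin \Delta^-_{S'}(X_1)$. Hence $\Delta^-_{S'}(X_1) = \Delta^-_S(X_1)$ and $\delta^-_{S'}(X_1) = k$, so $X_1$ is tight with respect to $S'$. Since $X_1$ contains $f_1$, it is a candidate in the definition of $X'_1$, giving $X'_1 \subseteq X_1$.

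\textbf{Step 2: $X'_1$ is tight with respect to $S$.} Conversely, $X'_1$ is tight with respect to $S'$ and contains $f_1$, hence contains ${\rm tail}(f_1) \in X_1$ — but more usefully, since $X'_1 \subseteq X_1$ from Step 1 and $f'_1$ is contained in $X_1$, we still need to check whether $f'_1$ is contained in $X'_1$. Here I would use that $f_1$ is contained in $X'_1$, so ${\rm tail}(f_1) \in X'_1$, while $f'_1$ might not lie entirely inside $X'_1$. The cleanest route: I claim $\Delta^-_{S}(X'_1) = \Delta^-_{S'}(X'_1)$, equivalently that neither $f'_1$ nor $f_1$ enters $X'_1$. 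We know $f_1$ does not enter $X'_1$ since it is contained in $X'_1$. For $f'_1$: its head equals ${\rm head}(f_1) \in X'_1$; if ${\rm tail}(f'_1) \notin X'_1$ then $f'_1$ enters $X'_1$, so $\delta^-_S(X'_1) = \delta^-_{S'}(X'_1) + 1 = k+1$, and then $X'_1$ would not be tight w.r.t.\ $S$ — that is fine in itself, but then I would argue directly: since $X'_1 \subseteq X_1$ with ${\rm tail}(f'_1) \in X_1 \setminus X'_1$, and since $X'_1$ is tight w.r.t.\ $S'$ and $X_1$ is tight w.r.t.\ $S'$ (Step 1), their intersection $X'_1$ and... this is getting circular. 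The right move is: if ${\rm tail}(f'_1) \notin X'_1$, remove the case by noting ${\rm head}(f'_1) \in X'_1 \subseteq X_1$, so $f'_1$ enters $X'_1$; then $\delta^-_{S}(X'_1)\le \delta^-_{S'}(X'_1) = k$ fails — actually $\delta^-_S(X'_1) = k+1$. But $X'_1 \subseteq V\setminus\{r\}$ and feasibility of $S$ only forces $\ge k$, so no contradiction yet. So instead, in this subcase consider $X'_1 + {\rm tail}(f'_1)$ or rather note that $f'_1 \in S \cap T \subseteq S'$ is false — $f'_1 \notin S'$. Let me restart Step 2 cleanly: since $f'_1$ is contained in $X_1$ and $X'_1 \subsetneq X_1$ is possible, I instead show $X_1 \subseteq X'_1$ by verifying $X'_1$ is tight w.r.t.\ $S$ and contains $f_1$, hence (since $f_1 \in S$? no, $f_1 \in T\setminus S$) — $X'_1$ tight w.r.t.\ $S$ and containing $f_1$ would force $X_1 \subseteq X'_1$ by minimality of $X_1$. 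To get $X'_1$ tight w.r.t.\ $S$: if ${\rm tail}(f'_1) \in X'_1$ then $f'_1$ is contained in $X'_1$, so $\Delta^-_S(X'_1) = \Delta^-_{S'}(X'_1)$, giving $\delta^-_S(X'_1) = k$, done. If ${\rm tail}(f'_1) \notin X'_1$, then $f'_1 \in \Delta^-_S(X'_1)$ so $\delta^-_S(X'_1) = k+1$; but then consider $Z := X'_1 \cup \{{\rm tail}(f'_1)\}$ — hmm, no. Better: in this subcase, note ${\rm tail}(f_1) \in X'_1$ and ${\rm head}(f_1) = {\rm head}(f'_1) \in X'_1$; since $X'_1$ is tight w.r.t.\ $S'$ and $f_1 \in S'$ is contained in $X'_1$, whereas $f'_1 \notin S'$, everything about $S'$ restricted to $X'_1$ matches $S$ except the single arc $f'_1$ which enters $X'_1$ in $S$. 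So $\delta^-_S(X'_1) = k+1 > k$. Then $X'_1$ is \emph{not} tight w.r.t.\ $S$, and I must rule this subcase out by other means: by Lemma~\ref{lem:tight} applied to $S'$ with the two $S'$-tight sets $X_1$ (Step 1) and $X'_1$ (their intersection $X'_1$ is $S'$-tight and $S'$ has no arc between $X_1\setminus X'_1$ and $X'_1 \setminus X_1$) — in particular $f'_1$, viewed in $S$: since ${\rm tail}(f'_1) \in X_1 \setminus X'_1$ and ${\rm head}(f'_1) \in X'_1$, the arc $f'_1$ goes from $X_1\setminus X'_1$ into $X'_1 \subseteq X_1$, which is internal to $X_1$; there is no contradiction with Lemma~\ref{lem:tight} directly. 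So this subcase genuinely needs the minimality of $X_1$: remove $f'_1$, i.e.\ work with $S'' := S$ but ask whether $X'_1 \cup \{{\rm tail}(f'_1)\}$ is tight. I expect the paper resolves it by showing ${\rm tail}(f'_1) \in X'_1$ must hold, perhaps because $X'_1 \cup \{{\rm tail}(f'_1)\}$ would otherwise be an $S$-tight set strictly inside $X_1$ containing $f_1$ — indeed if $\delta^-_S(X'_1) = k+1$ and ${\rm tail}(f'_1) \notin X'_1$, count $\delta^-_S(X'_1 \cup \{{\rm tail}(f'_1)\})$: arcs entering drop by at least the copies of $f'_1$-type arcs but this bookkeeping via submodularity should land it at $\le k$, contradicting minimality of $X_1$ since $X'_1\cup\{{\rm tail}(f'_1)\} \subsetneq X_1$ contains $f_1$.

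\textbf{Main obstacle.} The only delicate point is Step 2, specifically handling the subcase where ${\rm tail}(f'_1) \notin X'_1$; I expect the cleanest argument is to show directly that ${\rm tail}(f'_1) \in X'_1$ by a minimality/submodularity argument inside $X_1$ (using that $X_1$ is the \emph{minimal} $S$-tight set through $f_1$ and that $X'_1 \cup \{{\rm tail}(f'_1)\} \subseteq X_1$ would be a smaller tight candidate), after which $f'_1$ is contained in $X'_1$, $\Delta^-_S(X'_1) = \Delta^-_{S'}(X'_1)$, and $X'_1$ is $S$-tight and contains $f_1$, forcing $X_1 \subseteq X'_1$ and completing the proof together with Step 1.
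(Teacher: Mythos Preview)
Your proposal has a genuine gap stemming from a misreading of the definition of $X'_1$. In the paper, $X'_1$ is defined as the inclusionwise minimal tight set with respect to $S'$ that contains $f'_1$ (\emph{not} $f_1$): see the sentence just before the lemma, ``define $X'_i$ as the inclusionwise minimal tight vertex set with respect to $S'$ that contains $f_i$ ($f'_i$ if $i=1$)''. This is natural, because after the swap one has $T\setminus S'=\{f'_1,f_2,\dots,f_p\}$, so the role of $f_1$ is taken over by $f'_1$.

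This misreading is exactly what derails your Step~2. You spend the entire step worrying about the subcase ${\rm tail}(f'_1)\notin X'_1$, but with the correct definition this subcase is empty: $X'_1$ contains $f'_1$ \emph{by definition}, so both ${\rm head}(f'_1)$ and ${\rm tail}(f'_1)$ lie in $X'_1$. Once you know this, passing from $S'$ back to $S=S'-f_1+f'_1$ cannot increase the in-degree of $X'_1$ (adding $f'_1$ changes nothing since it is internal to $X'_1$, and removing $f_1$ can only decrease it), so $k=\delta^-_{S'}(X'_1)\ge \delta^-_{S}(X'_1)\ge k$. Equality throughout forces $f_1$ not to enter $X'_1$; since ${\rm head}(f_1)={\rm head}(f'_1)\in X'_1$, this means $f_1$ is contained in $X'_1$. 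Hence $X'_1$ is an $S$-tight set containing $f_1$, and minimality of $X_1$ gives $X_1\subseteq X'_1$.

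Your Step~1 computation (that $X_1$ is $S'$-tight) is correct, but the concluding sentence should invoke that $X_1$ contains $f'_1$ (which is part of the hypothesis on $f'_1$), not that it contains $f_1$; that is what makes $X_1$ a candidate in the definition of $X'_1$ and yields $X'_1\subseteq X_1$. With these two corrections the argument becomes exactly the paper's short proof, and the submodularity detour you sketch at the end is unnecessary.
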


\begin{proof}
We first show $X'_1 \subseteq X_1$. If $X_1 = V$, then $X'_1 \subseteq V = X_1$ is obvious. 
Otherwise, since  $X_1$ is a tight set with respect to $S$ that contains both $f_1$ and $f'_1$, 
we obtain $\delta^-_{S'}(X_1) = \delta^-_{S}(X_1) = k$, that is, $X_1$ is a tight set with respect to $S'$. 
This shows that $X'_1 \subseteq X_1$ by the minimality of $X'_1$.  

We next show $X'_1 \supseteq X_1$. If $X'_1 = V$, then $X'_1 = V \supseteq X_1$ is obvious. 
Otherwise, since  $X'_1$ is a tight set with respect to $S'$ that contains $f'_1$, 
we obtain $k = \delta^-_{S'}(X'_1)  \ge \delta^-_{S}(X'_1) \ge k$, which shows that $X'_1$ is a tight set with respect to $S$. 
This shows that $X'_1 \supseteq X_1$ by the minimality of $X_1$.  
This completes the proof. 
\end{proof}

Note that Lemma~\ref{lem:02} shows that the roles of $S$ and $S'$ are symmetric by replacing $f_1$ with $f'_1$. 
The following lemma shows a relationship between $X_i$ and $X'_i$, which plays a key role in our argument.

\begin{lemma}\label{lem:03}
Let $f'_1$ be an arc and $X'_i$ be the vertex set for $i \in [p]$ as above. 
Let $e \in S$ be an arc contained in $X_1$. 
For $i \in [p]$, we have one of the following: 
\begin{enumerate}
\item
$X_i = X'_i$,  
\item
$X_i$ contains $e$, or
\item
$X'_i$ contains $e$. 
\end{enumerate}
\end{lemma}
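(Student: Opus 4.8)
\textbf{Proof plan for Lemma~\ref{lem:03}.}

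The plan is to compare the tight sets with respect to $S$ and $S' = S - f'_1 + f_1$ directly, exploiting that $S$ and $S'$ differ only in the two arcs $f'_1$ and $f_1$, both of which are contained in $X_1$ and share the same head. The key observation is that for any vertex set $X \subseteq V \setminus \{r\}$, the value $\delta^-_{S'}(X)$ differs from $\delta^-_S(X)$ by at most one, and they can differ only if exactly one of $f'_1, f_1$ enters $X$; since ${\rm head}(f'_1) = {\rm head}(f_1)$, this happens precisely when the common head lies in $X$ while exactly one of ${\rm tail}(f'_1), {\rm tail}(f_1)$ lies outside $X$. In particular, any set $X$ that is tight with respect to one of $S, S'$ but not the other must ``separate'' $f'_1$ from $f_1$ in this sense, and a set containing both tails (or neither) of $f'_1, f_1$ is tight with respect to $S$ iff it is tight with respect to $S'$.

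Next I would fix $i \in [p]$ and suppose that neither (2) nor (3) holds, i.e.\ neither $X_i$ nor $X'_i$ contains $e$; the goal is then to deduce $X_i = X'_i$. Recall $e$ is an arc contained in $X_1$; a natural sub-case analysis is on whether $X_i$ and $X'_i$ are proper tight sets or equal to $V$ (the ``no tight set exists'' convention). Assuming both are proper tight sets, I would argue that $X_i$ is also tight with respect to $S'$ and that $X'_i$ is also tight with respect to $S$, using Lemma~\ref{lem:tight} (intersection/union of tight sets) applied to $X_i$ (resp.\ $X'_i$) together with $X_1$: since $X_1$ is tight with respect to both $S$ and $S'$ (this is essentially Lemma~\ref{lem:02}), the quantities $\delta^-_S(X_i)$ and $\delta^-_{S'}(X_i)$ are forced to coincide unless $X_i$ crosses $X_1$ in a way that isolates one of the modified arcs --- and the hypothesis that $e \notin X_i$ (with $e \subseteq X_1$, so $X_i$ does not contain all of $X_1$) is exactly what rules out the bad crossing pattern, or else lets me replace $X_i$ by $X_i \cap X_1$ or $X_i \cup X_1$ to reach a contradiction with minimality. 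Once $X_i$ is tight with respect to $S'$ and contains $f_i$ (for $i \neq 1$; the case $i = 1$ is Lemma~\ref{lem:02}), minimality of $X'_i$ gives $X'_i \subseteq X_i$; symmetrically $X_i \subseteq X'_i$, hence equality. The degenerate cases where $X_i = V$ or $X'_i = V$ should be handled by the same tightness-transfer argument: if, say, $X'_i$ is a proper tight set with respect to $S'$ avoiding $e$, it transfers to a proper tight set with respect to $S$, contradicting $X_i = V$.

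I expect the main obstacle to be the bookkeeping around \emph{crossing} tight sets: when $X_i$ (or $X'_i$) and $X_1$ are crossing but the intersection does not contain $f_i$, Lemma~\ref{lem:tight} gives tightness of $X_i \cap X_1$ and $X_i \cup X_1$ and the absence of arcs between $X_i \setminus X_1$ and $X_1 \setminus X_i$, but I then need to track \emph{which side} the tails ${\rm tail}(f'_1), {\rm tail}(f_1)$ and the endpoints of $e$ land on to conclude that $\delta^-_S$ and $\delta^-_{S'}$ agree on $X_i$. The hypothesis ``$e$ contained in $X_1$'' combined with ``$e$ not contained in $X_i$'' pins down enough of this to push the argument through, but getting the case split clean --- and making sure the three alternatives (1), (2), (3) really are exhaustive --- is the delicate part; everything else is a routine application of submodularity of $\delta^-$ and the minimality definitions of $X_i, X'_i$.
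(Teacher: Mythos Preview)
Your plan has a genuine gap in the central step. You want to deduce, from ``$e$ not contained in $X_i$'' (together with $e\subseteq X_1$), that $X_i$ is tight also with respect to $S'$; then symmetrically for $X'_i$; then conclude $X_i=X'_i$ by double minimality. The tool you propose is Lemma~\ref{lem:tight} applied to the pair $(X_i,X_1)$ (both tight for $S$). But this pair carries almost no information about $f_i$: there is no reason for $f_i$ to lie inside $X_1$, so even when $X_i\cap X_1$ is a strictly smaller tight set you cannot contradict the minimality of $X_i$, and the ``no arc between $X_i\setminus X_1$ and $X_1\setminus X_i$'' clause says nothing useful about $e$ (both endpoints of $e$ already sit in $X_1$). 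Concretely, the paper's own analysis shows that when $X_i\neq X'_i$ one has $\delta^-_S(X'_i)=k+1$, so $X'_i$ is \emph{not} tight for $S$; in that situation $e$ may lie in $X'_i$ and not in $X_i$, so your hypothesis ``$e\notin X_i$'' is compatible with $X_i\neq X'_i$, and the intermediate claim ``$X_i$ is tight for $S'$'' is simply false.

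The paper's argument avoids $X_1$ at this stage and instead compares $X_i$ with $X'_i$ directly: both contain $f_i$, so $X_i\cap X'_i\neq\emptyset$, and the minimality of $X_i$ forces $\delta^-_S(X_i\cap X'_i)\ge k+1$ once $X_i\setminus X'_i\neq\emptyset$. Plugging this and $\delta^-_S(X'_i)\le k+1$ into the submodular identity makes every inequality tight, which simultaneously yields (i) that $X_i\cup X'_i$ is tight for $S$ and contains $f_1$, hence $X_1\subseteq X_i\cup X'_i$, and (ii) that no arc of $S$ crosses between $X_i\setminus X'_i$ and $X'_i\setminus X_i$. Now $e\subseteq X_1\subseteq X_i\cup X'_i$ combined with (ii) forces $e$ into $X_i$ or into $X'_i$. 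The key idea you are missing is precisely this uncrossing of $X_i$ against $X'_i$ (not against $X_1$); once you make that switch, the proof goes through in a few lines.
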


\begin{proof}
By Lemma~\ref{lem:02}, it suffices to consider the case when $i \in [p] \setminus \{1\}$. 
If $X_i = V$ or $X'_i = V$, then the second or third condition holds, and so 
we may assume that $X_i, X'_i \subseteq V \setminus \{r\}$. 

Assume that $X_i \neq X'_i$. 
Since the roles of $S$ and $S'$ are symmetric as we have seen in Lemma~\ref{lem:02}, 
without loss of generality, we may assume that $X_i \setminus X'_i \neq \emptyset$. 
Since $X_i$ is the inclusionwise minimal tight set containing $f_i$ with respect to $S$, 
$X_i \cap X'_i$ is not a tight set with respect to $S$, i.e., 
\begin{equation}
\delta^-_{S} (X_i \cap X'_i) \ge k+1,    \label{eq:01}
\end{equation}
where we note that $X_i \cap X'_i \neq \emptyset$ as both $X_i$ and $X'_i$ contain $f_i$; see Figure~\ref{fig:figure3} (left).  
We also see that 
\begin{equation}
\delta^-_{S} (X'_i) \le \delta^-_{S'} (X'_i) + 1 = k+1,  \label{eq:02}
\end{equation}
because $\delta^-_{S'} (X'_i) = k$ and $S' = S - f'_1 + f_1$. 

By (\ref{eq:01}) and (\ref{eq:02}), we obtain 
\begin{align*}
k + (k+1) 
&\ge \delta^-_{S} (X_i) + \delta^-_{S} (X'_i) \\ 
&= \delta^-_{S} (X_i \cap X'_i) + \delta^-_{S} (X_i \cup X'_i) + \delta_S (X_i \setminus X'_i, X'_i \setminus X_i) + \delta_S (X'_i \setminus X_i, X_i \setminus X'_i) \\
&\ge (k+1) + k + 0 + 0.  
\end{align*}
This shows that all the inequalities are tight, which yields the following:  
\begin{enumerate}
\item[(a)]
$\delta^-_{S} (X'_i) = \delta^-_{S'} (X'_i) + 1 = k+1$, which implies that ${\rm head}(f'_1) = {\rm head}(f_1) \in X'_i$, ${\rm tail}(f'_1) \not\in X'_i$, and ${\rm tail}(f_1) \in X'_i$; see Figure~\ref{fig:figure3} (right), 
\item[(b)]
$X_i \cup X'_i$ is a tight set with respect to $S$, and 
\item[(c)]
$S$ contains no arc connecting $X_i \setminus X'_i$ and $X'_i \setminus X_i$. 
\end{enumerate}
By (a) and (b), $X_i \cup X'_i$ is a tight set with respect to $S$ that contains $f_1$, 
and hence $X_1 \subseteq X_i \cup X'_i$, because $X_1$ is the unique minimal tight set containing $f_1$. 
Therefore, any arc $e \in S$ contained in $X_1$ is also contained in $X_i \cup X'_i$. 
This together with (c) shows that such $e$ is contained in either $X_i$ or $X'_i$, which completes the proof. 
\end{proof}
\begin{figure}
    \centering
    \includegraphics[width=70mm]{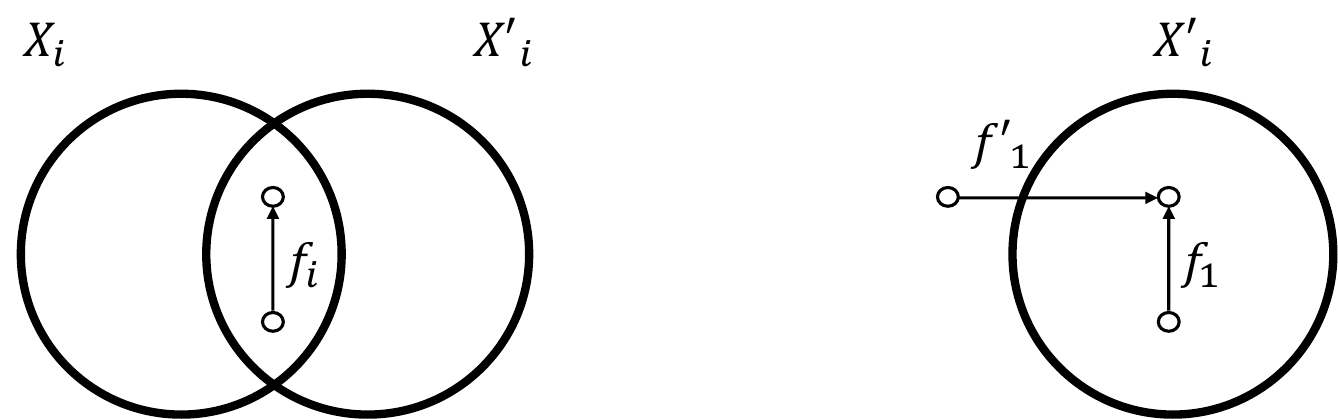}
    \caption{Location of the arcs $f_i$, $f_1$, and $f'_1$.}
    \label{fig:figure3}
\end{figure}

\subsection{Auxiliary Digraph}
\label{sec:auxiliary}

For two feasible arc subsets $S$ and $T$, we construct an associated \emph{auxiliary digraph} $H = (V_H, A_H)$ such that 
$V_H = [p]$ and $A_H$ contains an arc $(i, j)$ if $X_i$ contains $e_j$. 
Recall that $X_i \subseteq V$ is the inclusionwise minimal tight vertex set with respect to $S$ that contains $f_i$, 
or $X_i = V$ if no such tight set exists. 
Note that $H$ may contain self-loops. 
For example, in the case of Example~\ref{ex:detour} (see also Example~\ref{ex:Xi}), 
$H$ forms a dicycle of length $2$. 
In this subsection, we show some properties of $H$. 

\begin{lemma}\label{lem:04}
Every vertex in $H$ has at least one outgoing arc (possibly, a self-loop). 
\end{lemma}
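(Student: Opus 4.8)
\textbf{Proof plan for Lemma~\ref{lem:04}.}

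The plan is to fix a vertex $i \in V_H = [p]$ and exhibit an index $j \in [p]$ with $e_j$ contained in $X_i$; then $(i,j) \in A_H$ is the desired outgoing arc. The natural candidate is the arc $e$ of $S$ with ${\rm head}(e) = {\rm head}(f_i)$ that lies ``closest'' to $f_i$ inside $X_i$. First I would handle the degenerate case: if $X_i = V$, I expect this case cannot actually occur here, or must be ruled out separately --- indeed, by Theorem~\ref{thm:edmonds} every nonempty $X \subseteq V \setminus \{r\}$ satisfies $\delta^-_S(X) \ge k$, and one should check that there is always \emph{some} tight set with respect to $S$ containing $f_i$ (for instance, ${\rm head}(f_i)$ itself is tight since $\delta^-_S({\rm head}(f_i)) = k$ provided ${\rm head}(f_i) \neq r$, which holds because $f_i$ enters it). So in fact $X_i \subseteq V \setminus \{r\}$ always, and $X_i$ is a genuine tight set.

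Next, the core step: since $X_i$ is tight with respect to $S$, we have $\delta^-_S(X_i) = k$. Let $v = {\rm head}(f_i) = {\rm head}(e_i)$. I want to argue $v \in X_i$ and that some arc $e_j \in S$ with head $v$ has its tail in $X_i$ as well, i.e. $e_j$ is contained in $X_i$. Since $f_i$ is contained in $X_i$ (by definition of $X_i$), certainly $v = {\rm head}(f_i) \in X_i$. Now count the arcs of $S$ entering $v$: there are exactly $k$ of them (as $v \neq r$). If every one of these $k$ arcs had its tail outside $X_i$, they would all lie in $\Delta^-_S(X_i)$, and since $|\Delta^-_S(X_i)| = k$ exactly, these would be \emph{all} the arcs of $S$ entering $X_i$ --- in particular no arc of $S$ enters $X_i$ at any vertex other than $v$, and no arc of $S$ with head $v$ has tail inside $X_i$. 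The key observation is then that $f_i \in T \setminus S$ while ${\rm head}(f_i) = v$ and ${\rm head}(e_i) = v$ with $e_i \in S \setminus T$; so $e_i$ is one of the $k$ arcs of $S$ entering $v$. If $e_i$ were contained in $X_i$ we are done (take $j = i$). Otherwise ${\rm tail}(e_i) \notin X_i$, so $e_i \in \Delta^-_S(X_i)$. I then plan to derive a contradiction with minimality of $X_i$ by the same intersection/shrinking argument as in Lemma~\ref{lem:01}: if \emph{no} arc $e_j \in S$ with head $v$ is contained in $X_i$, consider removing from $X_i$ the tails (inside $X_i$) that are not reachable appropriately --- more carefully, one shows $X_i \setminus \{$some vertex$\}$ is still tight, or directly that $X_i$ was not minimal.

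The cleanest route, which I expect to be the actual argument, is this: suppose for contradiction that vertex $i$ has no outgoing arc, i.e. no $e_j$ ($j \in [p]$) is contained in $X_i$. Every arc of $S$ entering $v = {\rm head}(f_i)$ that is \emph{not} in $\Delta^-_S(X_i)$ has both endpoints in $X_i$, hence is contained in $X_i$; by assumption such an arc is not any $e_j$, so it lies in $S \cap T$. Meanwhile $T$ restricted to $X_i$: since $X_i$ is nonempty and $\subseteq V \setminus \{r\}$, Theorem~\ref{thm:edmonds} gives $\delta^-_T(X_i) \ge k$. Comparing the arc sets of $S$ and $T$ entering $X_i$ and using that $S$ and $T$ agree on $X_i$ except on $\{e_1,\dots,e_p\} \cup \{f_1,\dots,f_p\}$, together with the fact that $f_i$ enters $v$ from inside $X_i$ but is not in $S$, one gets that $S$ must have ``room'' to shrink $X_i$ --- formally, there is a proper tight subset of $X_i$ still containing $f_i$, contradicting minimality. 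I would set this up as a counting identity: $k = \delta^-_S(X_i)$, and track which of the $\le k$ arcs of $\Delta^-_S(v)$ and of $\Delta^-_T(v)$ are swapped.

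\textbf{Anticipated main obstacle.} The subtle point is not the $X_i = V$ case (which, as above, does not arise), but pinning down the contradiction with minimality when $e_i \notin X_i$: one must carefully exhibit the smaller tight set. I expect the author avoids the heavy counting by a slicker observation --- perhaps: the arcs of $S$ entering $v$ number exactly $k$; at least one of them, call it $e$, must be contained in $X_i$, because otherwise all $k$ arcs entering $v$ cross into $X_i$, forcing $\Delta^-_S(X_i) = \Delta^-_S(v)$, whence $X_i \setminus (\{$vertices of $X_i$ not equal to $v$ and not the tail of $f_i\})$... no: rather, whence $X_i \setminus \{x\}$ remains tight for a suitable $x$, or more simply $\{v\} \cup \{{\rm tail}(f_i)\}$-type sets give a smaller witness. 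The honest difficulty is therefore handling the possibility that ${\rm head}(f_i) = {\rm head}(e_i)$ but ${\rm tail}(e_i) \notin X_i$ --- ruling out that $e_i$ itself crosses the boundary while claiming some \emph{other} $e_j$ is internal. I would resolve this by showing that if every $S$-arc into $X_i$ crossed the boundary, we could replace $X_i$ by the tight set $X_i \cap X_i'$-style intersection used in Lemmas~\ref{lem:01}--\ref{lem:03}, contradicting minimality; the outgoing arc $(i,j)$ then corresponds to whichever internal $e_j$ this produces.
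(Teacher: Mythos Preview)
Your ``cleanest route'' paragraph contains the right ingredient --- compare $\delta^-_S(X_i)$ with $\delta^-_T(X_i)$ and track how the swaps $e_j \leftrightarrow f_j$ affect the in-degree of $X_i$ --- but you draw the wrong conclusion from it. You claim the comparison forces a proper tight subset of $X_i$ containing $f_i$, contradicting minimality; in fact it yields a direct numerical contradiction, and minimality of $X_i$ plays no role in this lemma. Concretely: assume no $e_j$ is contained in $X_i$. For each $j$ with ${\rm head}(e_j)={\rm head}(f_j)\in X_i$, the assumption forces ${\rm tail}(e_j)\notin X_i$, so $e_j\in\Delta^-_S(X_i)$; meanwhile $f_j$ may or may not cross into $X_i$. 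Thus every swap $e_j\leftrightarrow f_j$ either leaves $\delta^-(X_i)$ unchanged or decreases it when passing from $S$ to $T$. The swap at $j=i$ strictly decreases it, since $f_i$ is contained in $X_i$ by definition. Hence $\delta^-_T(X_i)\le\delta^-_S(X_i)-1=k-1$, contradicting Theorem~\ref{thm:edmonds} applied to $T$. This is exactly the paper's argument, phrased there via the index sets $I^+$ and $I^-$.

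Your other two lines of attack are dead ends. Counting the $k$ arcs of $S$ entering $v={\rm head}(f_i)$ does not help: even if some such arc has its tail inside $X_i$, it may well lie in $S\cap T$ rather than being an $e_j$, so you cannot conclude $(i,j)\in A_H$. And the repeated attempts to manufacture a smaller tight set (``$X_i\setminus\{\text{some vertex}\}$ is still tight'', or an $X_i\cap X_i'$-style intersection) never get off the ground because there is no second tight set in play here --- that machinery belongs to Lemmas~\ref{lem:01} and~\ref{lem:03}, not this one. Finally, your handling of the degenerate case is slightly off: the singleton $\{{\rm head}(f_i)\}$ is tight but does \emph{not} contain $f_i$ in the paper's sense (both endpoints are required), so it does not witness $X_i\ne V$. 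The correct disposal is trivial: if $X_i=V$ then $X_i$ contains every $e_j$, so $i$ has $p$ outgoing arcs.
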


\begin{proof}
Assume to the contrary that $i \in V_H = [p]$ has no outgoing arc in $H$. 
Then, by the definition of $H$, $X_i$ does not contain $e_j$ for any $j \in [p]$. 
Define $I^+, I^- \subseteq [p]$ as 
\begin{align*}
I^+ &= \{ j \in [p] \mid {\rm head}(e_j) = {\rm head}(f_j) \in X_i,\  {\rm tail}(e_j) \in X_i,\ {\rm tail}(f_j) \not\in X_i \}, \\ 
I^- &= \{ j \in [p] \mid {\rm head}(e_j) = {\rm head}(f_j) \in X_i,\  {\rm tail}(e_j) \not\in X_i,\ {\rm tail}(f_j) \in X_i \};  
\end{align*}
see Figure~\ref{fig:figure4}. 
Since $X_i$ contains $f_i$ but does not contain $e_i$, it holds that $i \in I^-$. 
We also see that $I^+ = \emptyset$ as $X_i$ does not contain $e_j$ for each $j$. 
Then, we obtain 
\[
k \le \delta^-_{T} (X_i) = \delta^-_{S} (X_i)  + |I^+| - |I^-| = k + |I^+| - |I^-| \le k-1,  
\]
which is a contradiction. 
\end{proof}

\begin{figure}[tbp]
    \centering
    \includegraphics[width=70mm]{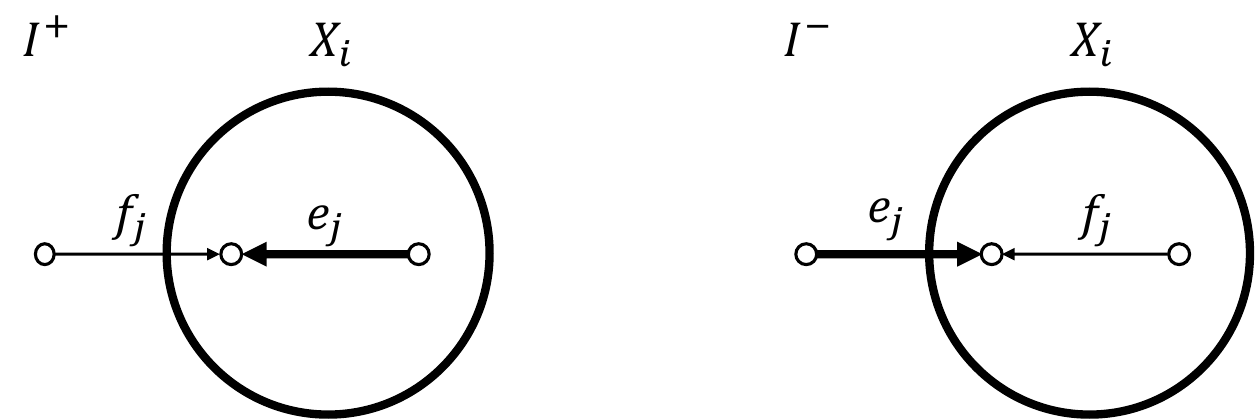}
    \caption{Definitions of $I^+$ and $I^-$.}
    \label{fig:figure4}
\end{figure}

\begin{lemma}\label{lem:05}
If $H$ has an arc $(i, j)$, then $X_i \cap X_j \neq \emptyset$. 
\end{lemma}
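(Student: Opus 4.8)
The plan is to prove Lemma~\ref{lem:05} by contradiction, using the same tight-set counting technique that underlies Lemma~\ref{lem:04}. Suppose $H$ has an arc $(i,j)$ but $X_i \cap X_j = \emptyset$. The arc $(i,j)$ means $X_i$ contains $e_j$, so in particular ${\rm head}(e_j) = {\rm head}(f_j) \in X_i$. Since $X_j$ is a tight set containing $f_j$, we have ${\rm head}(f_j) \in X_j$ as well, so ${\rm head}(e_j) \in X_i \cap X_j$, contradicting disjointness. This actually disposes of the case where $X_i \neq V$, since then $X_i$ being a genuine tight set containing $e_j$ forces ${\rm head}(e_j) \in X_i$, and the argument above applies verbatim.

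The remaining case is $X_i = V$. But then $X_i \cap X_j = X_j$, which is nonempty because $X_j$ contains $f_j$, so again we are done. Hence in all cases $X_i \cap X_j \ni {\rm head}(e_j)$ (or at least is nonempty), contradiction. So the proof is essentially immediate once one unpacks the definitions: an arc $(i,j)$ in $H$ witnesses that $e_j$ — and in particular its head ${\rm head}(e_j) = {\rm head}(f_j)$ — lies in $X_i$, while $X_j$ by construction contains $f_j$ and therefore ${\rm head}(f_j)$ as well.

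I would therefore write the proof roughly as follows:

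\begin{proof}
Suppose $H$ has an arc $(i, j)$, so that $X_i$ contains $e_j$; in particular ${\rm head}(e_j) \in X_i$. Recall that ${\rm head}(e_j) = {\rm head}(f_j)$. If $X_j = V$, then $X_i \cap X_j = X_i \ni {\rm head}(e_j)$, which is nonempty. Otherwise $X_j$ is a tight set with respect to $S$ containing $f_j$, hence ${\rm head}(f_j) \in X_j$. Thus ${\rm head}(e_j) = {\rm head}(f_j) \in X_i \cap X_j$, so $X_i \cap X_j \neq \emptyset$.
\end{proof}

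There is essentially no obstacle here; the only thing to be careful about is the degenerate convention $X_i = V$ (respectively $X_j = V$) when no tight set exists, which must be handled separately but is trivial since $V$ contains everything. The lemma is a bookkeeping step whose real purpose is to feed into the later arguments — presumably it is used together with Lemma~\ref{lem:tight} to intersect and union the sets $X_i$ along a dicycle of $H$ — so the statement is kept minimal and the proof correspondingly short.
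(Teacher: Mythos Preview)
Your proof is correct and follows the same idea as the paper's: both observe that ${\rm head}(e_j) = {\rm head}(f_j)$ lies in $X_i$ (since $X_i$ contains $e_j$) and in $X_j$ (since $X_j$ contains $f_j$), so the intersection is nonempty. The paper's version is slightly terser because it does not separate out the degenerate case $X_j = V$; the convention that $X_j$ ``contains $f_j$'' holds uniformly, so no case split is needed.
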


\begin{proof}
Since $X_i$ contains $e_j$ and $X_j$ contains $f_j$, both $X_i$ and $X_j$ contain the vertex ${\rm head}(e_j) = {\rm head}(f_j)$. 
This shows that $X_i \cap X_j \neq \emptyset$.
\end{proof}

\begin{lemma}\label{lem:06}
If $H$ has a dipath $P$ such that 
$\bigcup_{i \in V(P)} X_i$ contains some arc $e \in S$, then 
there exists $i \in V(P)$ such that $X_i$ contains $e$,  
where $V(P)$ denotes the set of vertices in $P$. 
\end{lemma}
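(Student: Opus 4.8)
\textbf{Proof plan for Lemma~\ref{lem:06}.}
The plan is to induct on the length of the dipath $P$, using the ``tight sets are closed under union'' half of Lemma~\ref{lem:tight} together with the ``no arc between the symmetric differences'' half. If $P$ has a single vertex $i$, then the hypothesis says $X_i$ contains $e$ and there is nothing to prove. So suppose $P = v_1 v_2 \cdots v_m$ with $m \ge 2$, and write $Y = \bigcup_{t=1}^{m-1} X_{v_t}$ and $X = X_{v_m}$, so that $\bigcup_{i \in V(P)} X_i = Y \cup X$ contains $e$. If $X_{v_m}$ already contains $e$, we are done; otherwise $e$ is contained in $Y \cup X$ but not in $X$, and I want to conclude that $e$ is contained in $Y$, after which the induction hypothesis applied to the subpath $v_1 \cdots v_{m-1}$ finishes the argument.

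The key observation is that $Y$ is tight with respect to $S$: this follows by repeatedly applying Lemma~\ref{lem:tight} to the sets $X_{v_1}, \dots, X_{v_{m-1}}$, where at each step the two sets being unioned have nonempty intersection because consecutive $X_{v_t}, X_{v_{t+1}}$ intersect by Lemma~\ref{lem:05} (and the running union only grows). Here I should also handle the degenerate possibility that some $X_{v_t} = V$, in which case $Y = V$ trivially contains $e$ and the induction step is immediate; so assume all the relevant sets lie in $V \setminus \{r\}$ and are genuinely tight. Now both $Y$ and $X = X_{v_m}$ are tight; if $Y \cap X = \emptyset$ then $e \subseteq Y \cup X$ forces $e \subseteq Y$ or $e \subseteq X$, and since $e \not\subseteq X$ we get $e \subseteq Y$. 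If $Y \cap X \ne \emptyset$, then Lemma~\ref{lem:tight} tells us $S$ has no arc connecting $Y \setminus X$ and $X \setminus Y$; since $e \in S$ and $e$ is contained in $Y \cup X$, both endpoints of $e$ lie in $Y \cup X$, and they cannot straddle $Y \setminus X$ and $X \setminus Y$, so $e$ is contained in $Y$ or in $X$; again $e \not\subseteq X$ gives $e \subseteq Y$.

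Either way $e$ is contained in $Y = \bigcup_{t=1}^{m-1} X_{v_t}$, so by the induction hypothesis applied to the dipath $v_1 \cdots v_{m-1}$ in $H$ there is some $t \in [m-1]$ with $X_{v_t}$ containing $e$, and that $v_t \in V(P)$ is the desired vertex. The main thing to be careful about is making the union $Y$ genuinely tight: one must verify that the partial unions $X_{v_1} \cup \cdots \cup X_{v_t}$ stay nonempty-intersecting with the next set $X_{v_{t+1}}$, which is where the dipath structure (and Lemma~\ref{lem:05} giving intersection of endpoints of consecutive arcs) is used rather than an arbitrary vertex subset; for an arbitrary subset the union need not be tight and the statement would fail. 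No other step is delicate.
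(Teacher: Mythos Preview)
Your proof is correct and follows essentially the same approach as the paper: both arguments repeatedly use Lemmas~\ref{lem:tight} and~\ref{lem:05} to show that partial unions along the dipath are tight, then invoke the ``no arc between the symmetric differences'' clause of Lemma~\ref{lem:tight} to force $e$ into one of the two pieces (the paper phrases this as a minimal-counterexample choice of $j-i$, you phrase it as induction on the path length, but the content is identical). One small remark: the case $Y \cap X = \emptyset$ you treat separately actually never occurs, since $X_{v_{m-1}} \cap X_{v_m} \neq \emptyset$ by Lemma~\ref{lem:05} and $X_{v_{m-1}} \subseteq Y$; this is fortunate, because your argument for that case is not quite right---an arc with both endpoints in a disjoint union $Y \cup X$ could in principle have one endpoint in each piece.
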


\begin{proof}
If $X_i = V$ for some $i \in V(P)$, then the claim is obvious. 
Thus, it suffices to consider the case when $X_{i} \neq V$ for any $i \in V(P)$.
By renaming the indices if necessary, we may assume that $P$ traverses $1, 2, \dots, |V(P)|$ in this order. 
Assume to the contrary that $e$ is not contained in $X_i$ for any $i \in V(P)$. 
Let $1 \le i < j \le |V(P)|$ be indices that minimize $j-i$ subject to $X_i \cup X_{i+1} \cup \dots \cup  X_j$ contains $e$. 
Let $Y := X_i \cup X_{i+1} \cup \dots \cup X_{j-1}$. 
By applying Lemmas~\ref{lem:tight} and~\ref{lem:05} repeatedly, we see that $Y$ is tight and $Y \cap X_j \neq \emptyset$. 
Then, Lemma~\ref{lem:tight} shows that no arc in $S$ connects $Y \setminus X_j$ and $X_j \setminus Y$.  
Hence, $e \in S$ has to be contained in $Y$ or $X_j$, which contradicts the minimality of $j-i$. 
\end{proof}

\subsection{Shortest Dicycle}
\label{sec:shortestdicycle}

We see that $H$ has a dicycle  by Lemma~\ref{lem:04}. 
Let $C$ be a shortest dicycle in $H$, and let $q$ denote its length. 
If $q  = 1$, i.e., $H$ contains a self-loop incident to $i \in V_H$, 
then Lemma~\ref{lem:01} shows that $S' := S - e_i + f_i$ is feasible and $|S' \setminus T| = p-1$. 
Thus, in what follows, we consider the case when $q \ge 2$. 
This implies that  $X_{i} \neq V$ for any $i \in [p]$. 
By renaming the indices if necessary, we may assume that $C$ traverses $1, 2, \dots , q \in V_H$ in this order. 
Let $Y := X_2 \cup X_3 \cup \dots \cup X_q$. 
Note that both $Y$ and $X_1 \cap Y$ are tight with respect to $S$ 
by Lemmas~\ref{lem:tight} and~\ref{lem:05}.

\begin{lemma}\label{lem:071}
Arc $e_2$ is not contained in $Y$.
\end{lemma}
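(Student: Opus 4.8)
The statement to prove is Lemma~\ref{lem:071}: along the shortest dicycle $C = (1, 2, \dots, q, 1)$ of the auxiliary digraph, the arc $e_2$ is not contained in the tight set $Y := X_2 \cup X_3 \cup \dots \cup X_q$. The natural strategy is a minimality argument on $C$. Since $C$ has an arc $(1,2)$, by definition $X_1$ contains $e_2$; and of course $X_2$ contains $f_2$ but not $e_2$ (as $e_2 \ne f_2$ and $X_2$ is a minimal tight set containing $f_2$, which forces $e_2 = \mathrm{head}(e_2) = \mathrm{head}(f_2)$ considerations, so $\mathrm{tail}(e_2) \notin X_2$). So $e_2$ lives in $X_1$ but not in $X_2$; the question is whether it can sneak into the union $X_3 \cup \dots \cup X_q$ (together with $X_2$).

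\textbf{Key steps.} First I would suppose, for contradiction, that $e_2$ is contained in $Y = X_2 \cup \dots \cup X_q$. Since $X_2$ does not contain $e_2$, by Lemma~\ref{lem:06} applied to the subdipath $2, 3, \dots, q$ of $C$, there exists some index $j \in \{3, \dots, q\}$ such that $X_j$ contains $e_2$. But then, by the definition of the auxiliary digraph $H$, there is an arc $(j, 2)$ in $H$. Now $C$ contains the dipath from $2$ to $j$ (namely $2, 3, \dots, j$), which together with the arc $(j,2)$ forms a dicycle of length $j - 2 + 1 = j - 1 \le q - 1 < q$. This contradicts the minimality of $q$ as the length of the shortest dicycle. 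Hence $e_2$ is not contained in $Y$.

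\textbf{Main obstacle.} The argument is short, so the only subtlety I anticipate is making sure Lemma~\ref{lem:06} is applicable: it requires a dipath in $H$ whose associated union of $X_i$'s contains the arc $e$, and here the dipath is $2 \to 3 \to \dots \to q$, whose vertex set gives exactly $X_2 \cup X_3 \cup \dots \cup X_q = Y$; and $e = e_2 \in S$ by assumption is contained in $Y$, so the hypothesis is met. One must also double-check that $j \ge 3$ (not $j = 2$), which is guaranteed because $X_2$ does not contain $e_2$ --- this is exactly why Lemma~\ref{lem:06} yields an index strictly after the first vertex of the dipath (the proof of Lemma~\ref{lem:06} returns some $i \in V(P)$ with $X_i$ containing $e$, and since $i=2$ is excluded we get $i \in \{3,\dots,q\}$). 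The resulting dicycle $2, 3, \dots, j, 2$ is genuinely a dicycle in $H$ since all arcs $(2,3), (3,4), \dots, (j-1,j), (j,2)$ are present, and its length $j-1$ is strictly less than $q$ as long as $j \le q$, which holds. This closes the contradiction with the shortest-dicycle assumption.
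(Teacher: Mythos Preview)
Your proposal is correct and follows essentially the same approach as the paper: assume $e_2$ is contained in $Y$, invoke Lemma~\ref{lem:06} on the dipath $2,3,\dots,q$ to find some $X_j$ containing $e_2$, and use the resulting arc $(j,2)$ to build a dicycle shorter than $C$. The paper's version is slightly more streamlined in that it does not bother to argue $j\neq 2$ separately; even if $j=2$ one would obtain a self-loop at $2$, still a dicycle of length $1<q$, so the contradiction goes through without that case distinction (and your informal justification for why $X_2$ cannot contain $e_2$ is a bit garbled---the clean reason is simply that $q\ge 2$ rules out self-loops in $H$).
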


\begin{proof}
Assume to the contrary that $e_2$ is contained in $Y$. 
Then, by Lemma~\ref{lem:06}, there exists $i \in \{2, 3, \dots , q\}$ 
such that $X_i$ contains $e_2$. 
In such a case, since $H$ contains an arc $(i, 2)$ by definition, 
$H$ has a dicycle traversing $2, 3, \dots , i$ in this order, 
which contradicts the choice of $C$. 
\end{proof}

\begin{lemma}\label{lem:072}
Arc $e_2$ is from $X_1 \setminus Y$ to $X_1 \cap Y$. 
\end{lemma}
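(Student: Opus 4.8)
I want to show that $e_2$, the arc of $S \setminus T$ paired with the vertex $C$ enters at its second vertex, goes from $X_1 \setminus Y$ into $X_1 \cap Y$, where $Y = X_2 \cup \dots \cup X_q$. By Lemma~\ref{lem:071} I already know $e_2$ is not contained in $Y$. Since $C$ contains the arc $(1,2)$, the set $X_1$ contains $e_2$ by the definition of $H$, so both ${\rm head}(e_2)$ and ${\rm tail}(e_2)$ lie in $X_1$. Thus it remains only to locate $e_2$ relative to $Y$: I must argue that ${\rm head}(e_2) \in Y$ while ${\rm tail}(e_2) \notin Y$.

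**Locating the head of $e_2$.**
The head of $e_2$ is ${\rm head}(e_2) = {\rm head}(f_2)$, and $f_2$ is contained in $X_2 \subseteq Y$, so ${\rm head}(e_2) \in Y$. Combined with $e_2 \subseteq X_1$, this gives ${\rm head}(e_2) \in X_1 \cap Y$, which is exactly the target endpoint.

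**Locating the tail of $e_2$.**
This is the step I expect to carry the real content. Suppose for contradiction that ${\rm tail}(e_2) \in Y$. Then together with ${\rm head}(e_2) \in Y$ this would mean $Y$ contains the arc $e_2$; but $Y = X_2 \cup \dots \cup X_q$ is exactly the union over the dipath $2 \to 3 \to \dots \to q$ in $H$ (which is a dipath since $C$ is a dicycle traversing $1,2,\dots,q$), so by Lemma~\ref{lem:06} some $X_i$ with $i \in \{2,\dots,q\}$ would contain $e_2$. That yields an arc $(i,2)$ in $H$, hence a dicycle $2 \to 3 \to \dots \to i \to 2$ of length $i-1 < q$ (or a self-loop if $i = 2$), contradicting minimality of $C$. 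Hence ${\rm tail}(e_2) \notin Y$. Since $e_2 \subseteq X_1$, we get ${\rm tail}(e_2) \in X_1 \setminus Y$, completing the proof.

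In fact the argument above is essentially a restatement of Lemma~\ref{lem:071}: the only new observation needed is that ${\rm head}(e_2) \in Y$ (which is immediate from $f_2 \in X_2$), so that the failure of "$e_2 \subseteq X_1 \setminus Y$" would force "$e_2 \subseteq Y$", contradicting Lemma~\ref{lem:071}. The proof can therefore be written quite compactly:

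\begin{proof}
Since $C$ contains the arc $(1,2)$, the set $X_1$ contains $e_2$ by the definition of $H$, so ${\rm head}(e_2), {\rm tail}(e_2) \in X_1$. As ${\rm head}(e_2) = {\rm head}(f_2)$ and $f_2$ is contained in $X_2 \subseteq Y$, we have ${\rm head}(e_2) \in X_1 \cap Y$. If ${\rm tail}(e_2)$ were also in $Y$, then $Y$ would contain $e_2$, contradicting Lemma~\ref{lem:071}. Hence ${\rm tail}(e_2) \in X_1 \setminus Y$, as claimed.
\end{proof}
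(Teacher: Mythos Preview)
Your proof is correct and follows essentially the same approach as the paper: both arguments observe that $e_2$ is contained in $X_1$ because $(1,2)$ is an arc of $H$, that ${\rm head}(e_2) = {\rm head}(f_2) \in X_2 \subseteq Y$, and then invoke Lemma~\ref{lem:071} to conclude ${\rm tail}(e_2) \notin Y$. The only difference is cosmetic ordering of these observations.
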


\begin{proof}
Since ${\rm head} (e_2) = {\rm head} (f_2) \in X_2 \subseteq Y$, 
Lemma~\ref{lem:071} shows that ${\rm tail} (e_2) \not\in Y$. 
We also see that $e_2$ is contained in $X_1$ as $H$ contains arc $(1, 2)$. 
By combining them,  $e_2$ is from $X_1 \setminus Y$ to $X_1 \cap Y$.
\end{proof}

\begin{lemma}\label{lem:08}
Arc $f_1$ is from $X_1 \setminus Y$ to $X_1 \cap Y$. 
\end{lemma}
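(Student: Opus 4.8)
The plan is to mirror the argument of Lemma~\ref{lem:072}, but for $f_1$ in place of $e_2$. First recall what we already know: $C$ traverses $1,2,\dots,q$ in this order, so $H$ contains the arc $(1,2)$, which by definition means $X_1$ contains $e_2$; also $\mathrm{head}(f_1)=\mathrm{head}(e_1)$, and $X_1$ is the minimal tight set containing $f_1$. The target claim has two halves: (i) $\mathrm{head}(f_1)\in X_1\cap Y$, and (ii) $\mathrm{tail}(f_1)\in X_1\setminus Y$. Both endpoints lie in $X_1$ trivially, since $X_1$ contains $f_1$; the real content is showing $\mathrm{head}(f_1)\in Y$ and $\mathrm{tail}(f_1)\notin Y$.

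For the head: I would argue that $\mathrm{head}(f_1)=\mathrm{head}(e_1)\in Y=X_2\cup\dots\cup X_q$. The natural route is to use the arc of $C$ entering vertex $1$, namely $(q,1)$, which tells us $X_q$ contains $e_1$; hence $\mathrm{head}(e_1)\in X_q\subseteq Y$. That settles (i) immediately. For the tail: suppose for contradiction that $\mathrm{tail}(f_1)\in Y$. Then $f_1$ is an arc with both endpoints — $\mathrm{tail}(f_1)$ and $\mathrm{head}(f_1)$ — in $Y$, i.e.\ $Y$ contains $f_1$. Now $Y$ is tight with respect to $S$ (noted just before Lemma~\ref{lem:071}, via Lemmas~\ref{lem:tight} and~\ref{lem:05}), and $Y\subseteq V\setminus\{r\}$ since each $X_i\neq V$ for $i\le q$. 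So $Y$ would be a tight set with respect to $S$ that contains $f_1$, forcing $X_1\subseteq Y$ by minimality of $X_1$. But then $e_2$, which is contained in $X_1$, would be contained in $Y$, contradicting Lemma~\ref{lem:071}. Hence $\mathrm{tail}(f_1)\notin Y$, giving (ii).

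Putting the two halves together: $\mathrm{head}(f_1)\in X_1\cap Y$ and $\mathrm{tail}(f_1)\in X_1\setminus Y$, so $f_1$ runs from $X_1\setminus Y$ to $X_1\cap Y$, as claimed.

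I do not anticipate a serious obstacle here; the only point to be careful about is invoking the right prerequisites in the right order — specifically that $Y$ is tight and avoids $r$ (so that the minimality of $X_1$ can be applied to $Y$), and that Lemma~\ref{lem:071} is available to derive the contradiction from $X_1\subseteq Y$. If one wanted to avoid leaning on the arc $(q,1)$ for part (i), an alternative is a counting argument: if $\mathrm{head}(f_1)\notin Y$ then one compares $\delta^-_S(Y\cup X_1)$ with $\delta^-_S(X_1)$ and $\delta^-_S(Y)$ using the submodular inequality of Lemma~\ref{lem:tight}, but the direct argument via the incoming arc of $C$ at vertex $1$ is cleaner and is the one I would write up.
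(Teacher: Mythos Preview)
Your proposal is correct and follows essentially the same approach as the paper: both use the arc $(q,1)$ of $C$ to place $\mathrm{head}(f_1)\in X_q\subseteq Y$, and both derive a contradiction from $\mathrm{tail}(f_1)\in Y$ via the minimality of $X_1$ together with the fact that $e_2$ lies in $X_1$ but not in $Y$. The only cosmetic difference is that the paper works with the tight set $X_1\cap Y$ (a proper subset of $X_1$ since $\mathrm{tail}(e_2)\notin Y$, contradicting minimality directly), whereas you work with $Y$ itself (concluding $X_1\subseteq Y$ and then invoking Lemma~\ref{lem:071}); these are two phrasings of the same idea.
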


\begin{proof}
By definition, $f_1$ is contained in $X_1$, which means that ${\rm head} (f_1) \in X_1$ and ${\rm tail} (f_1) \in X_1$. 
Furthermore, since $e_1$ is contained in $X_q$ as $H$ contains arc $(q, 1)$, 
we have that ${\rm head} (f_1) = {\rm head} (e_1) \in X_q \subseteq Y$.
Thus, it suffices to show that ${\rm tail} (f_1) \not\in Y$. 

Assume to the contrary that ${\rm tail} (f_1) \in Y$.
Then, $X_1 \cap Y$ contains $f_1$. 
Since $X_1 \cap Y$ is a tight set with respect to $S$ by Lemma~\ref{lem:tight} and 
$X_1 \cap Y \subseteq X_1 \setminus \{ {\rm tail} (e_2) \}$ by Lemma~\ref{lem:072}, this contradicts the minimality of $X_1$.  
\end{proof}

See Figure~\ref{fig:figure1} for the illustration of Lemmas~\ref{lem:071}--\ref{lem:08}.

\begin{figure}[tbp]
 \begin{tabular}{cc}
 \begin{minipage}[t]{0.5\hsize}
  \begin{center}
   \includegraphics[width=45mm]{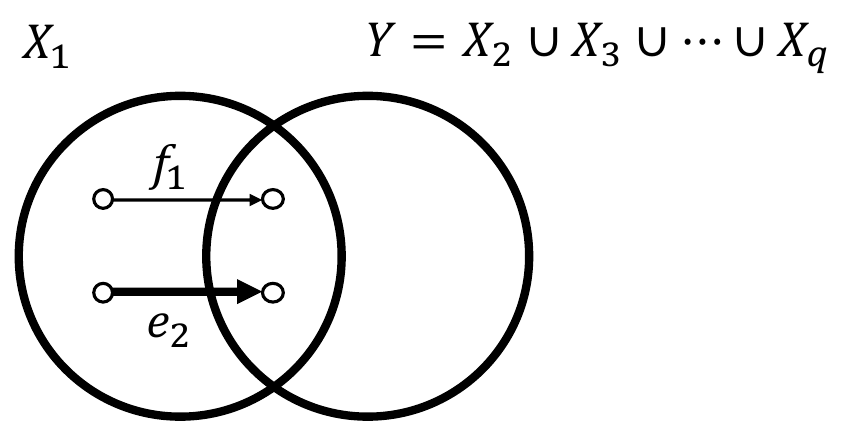}
   \caption{Location of $e_2$ and $f_1$.
}
  \label{fig:figure1}
   \end{center}
 \end{minipage}
 \hfill
 \begin{minipage}[t]{0.5\hsize}
   \begin{center}
   \includegraphics[width=45mm]{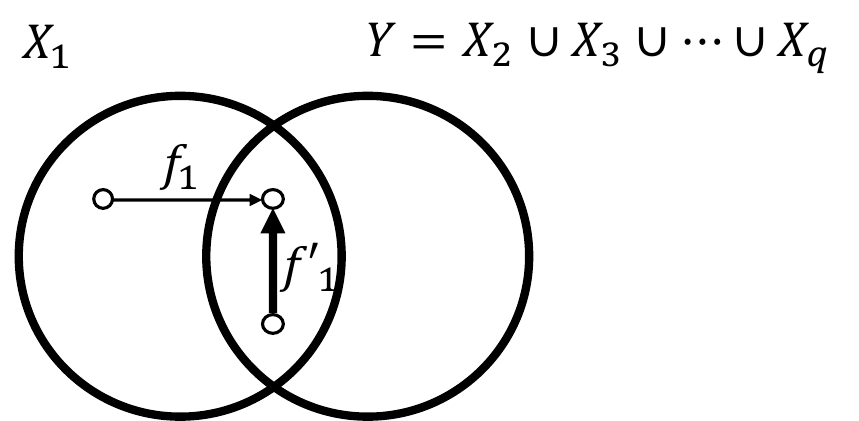}
   \caption{Location of $f'_1$.}
     \label{fig:figure5}
    \end{center}
 \end{minipage}\\
 \end{tabular}
\end{figure}

\begin{lemma}\label{lem:09}
There exists an arc $f'_1 \in S$ with ${\rm head}(f'_1) = {\rm head}(f_1)$ such that either 
\begin{enumerate}
\item
$f'_1 \in S \setminus T$ and $f'_1$ is contained in $X_1$, or 
\item
$f'_1 \in S \cap T$ and $f'_1$ is contained in $X_1 \cap Y$. 
\end{enumerate}
\end{lemma}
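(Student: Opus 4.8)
\textbf{Proof plan for Lemma~\ref{lem:09}.}

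The goal is to locate an arc $f'_1 \in S$ entering ${\rm head}(f_1)$ that sits inside $X_1$, with the extra requirement that if this arc also lies in $T$ then it must lie in the smaller set $X_1 \cap Y$. The natural starting point is a counting argument on the indegree of $X_1 \cap Y$ with respect to $S$. By Lemmas~\ref{lem:tight} and~\ref{lem:05}, $X_1 \cap Y$ is tight with respect to $S$, so $\delta^-_S(X_1 \cap Y) = k$; in particular $X_1 \cap Y$ is nonempty and receives at least one arc of $S$. I plan to pick $f'_1$ to be an arc of $S$ entering $X_1 \cap Y$ whose head is exactly ${\rm head}(f_1)$, and then argue that such an arc exists and has the required form. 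The existence of an arc of $S$ with head ${\rm head}(f_1)$ entering $X_1 \cap Y$ should follow because ${\rm head}(f_1) \in X_1 \cap Y$ (established in Lemma~\ref{lem:08}) and, since $S$ is feasible, $\delta^-_S({\rm head}(f_1)) = k \ge 1$; one then has to make sure at least one of these $k$ arcs comes from outside $X_1 \cap Y$, which again uses $\delta^-_S(X_1 \cap Y) = k$ and the fact that $X_1 \cap Y$ has other vertices absorbing incoming arcs, or more directly the location of $e_2$ and $f_1$ from Lemmas~\ref{lem:072} and~\ref{lem:08}.

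The heart of the argument is the case analysis on whether $f'_1 \in T$ or not. If $f'_1 \notin T$, then $f'_1 \in S \setminus T$, and since $f'_1$ enters $X_1 \cap Y \subseteq X_1$ and is not pointing out of $X_1$ (its head is in $X_1 \cap Y \subseteq X_1$, its tail is outside $X_1 \cap Y$ — I need to check the tail is not outside all of $X_1$, but if it were, $f'_1$ would be contained in... wait, no: ``contained in $X_1$'' requires both endpoints in $X_1$) — so here I must be slightly careful: I should choose $f'_1$ to enter $X_1 \cap Y$ from within $X_1$, i.e. ${\rm tail}(f'_1) \in X_1 \setminus Y$, so that $f'_1$ is contained in $X_1$. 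That such a choice is possible is exactly where I expect the main difficulty: I need an arc of $S$ into ${\rm head}(f_1)$ whose tail lies in $X_1 \setminus Y$ rather than outside $X_1$ entirely. This should be forced by comparing $\delta^-_S(X_1)$ and $\delta^-_S(X_1 \cap Y)$, both equal to $k$: since $e_2$ enters $X_1 \cap Y$ from $X_1 \setminus Y$ (Lemma~\ref{lem:072}) and $f_1$ also enters $X_1 \cap Y$ from $X_1 \setminus Y$ but $f_1 \notin S$, a careful bookkeeping of which arcs of $S$ cross into $X_1 \cap Y$ versus into $X_1$ pins down where the relevant $S$-arc into ${\rm head}(f_1)$ must originate.

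Concretely, the plan is: let $Z = X_1 \cap Y$, note $\delta^-_S(Z) = \delta^-_S(X_1) = k$ and $Z \subseteq X_1$; every arc of $S$ entering $X_1$ from $V \setminus X_1$ also enters $Z$ iff its head is in $Z$, so the arcs of $S$ entering $Z$ split into those entering from $V \setminus X_1$ and those entering from $X_1 \setminus Z = X_1 \setminus Y$. Since ${\rm head}(f_1) \in Z$ and $\delta^-_S({\rm head}(f_1)) = k \ge 1$, there is at least one $S$-arc into ${\rm head}(f_1)$; I want one of them to come from $X_1 \setminus Y$. If instead \emph{all} $S$-arcs into ${\rm head}(f_1)$ came from $V \setminus X_1$, I would derive a contradiction with the minimality of $X_1$: roughly, $X_1 \setminus \{{\rm head}(f_1)\} \cup \{\text{something}\}$... actually the cleaner route is to observe that $f_1$ enters $Z$ from $X_1 \setminus Y$ (Lemma~\ref{lem:08}), so ${\rm tail}(f_1) \in X_1 \setminus Y$; consider the arc $f'_1$ of $S$ into ${\rm head}(f_1)$ — if its tail is in $X_1$, take it and we are in case~1 or~2 depending on membership in $T$; if its tail is outside $X_1$, then... this is precisely the delicate point, and I expect to resolve it by a tightness/minimality argument analogous to the proof of Lemma~\ref{lem:08}, showing that picking $f'_1 \in S \cap T$ forces ${\rm tail}(f'_1) \notin X_1 \setminus Y$ leads to $X_1 \cap Y$ (or a yet smaller tight set) containing $f'_1$ and hence violating minimality of $X_1$ — wait, no, $f'_1 \in S$ so that is consistent. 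I think the correct dichotomy is: if the chosen $S$-arc $f'_1$ into ${\rm head}(f_1)$ lies in $T$, then because $T$ is feasible and $f_1 \in T$ with the same head, comparing indegrees of $Z$ in $T$ versus $S$ forces ${\rm tail}(f'_1) \in Y$, hence ${\rm tail}(f'_1) \in X_1 \cap Y$ and $f'_1$ is contained in $X_1 \cap Y$, giving case~2; if $f'_1 \notin T$, a parallel argument on $X_1$ gives $f'_1$ contained in $X_1$, giving case~1. The main obstacle is organizing these indegree comparisons so that the two cases are exhaustive and no stray arc configuration escapes; I will handle it by a single counting identity for $\delta^-_T(Z) = \delta^-_S(Z) + |I^+| - |I^-|$ over the index set, mirroring Lemma~\ref{lem:04}, and reading off the claim from the signs.
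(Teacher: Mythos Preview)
Your plan contains a genuine gap: you are searching for the wrong kind of arc. You repeatedly try to locate an $S$-arc into ${\rm head}(f_1)$ whose tail lies in $X_1 \setminus Y$, and then run a case analysis on whether that arc lies in $T$. But nothing forces such an arc to exist. It is perfectly possible that every $S$-arc with head ${\rm head}(f_1)$ has its tail either in $X_1 \cap Y$ or outside $X_1$ altogether; your counting sketch comparing $\delta^-_S(X_1)$ and $\delta^-_S(X_1\cap Y)$ does not rule this out. Your final dichotomy (``if $f'_1\in T$ then an indegree comparison forces ${\rm tail}(f'_1)\in Y$; if $f'_1\notin T$ then a parallel argument gives ${\rm tail}(f'_1)\in X_1$'') is not a valid argument: a global count like $\delta^-_T(Z)=\delta^-_S(Z)+|I^+|-|I^-|$ cannot pin down the tail location of one particular arc you have already chosen, and you never specify which $f'_1$ you are choosing.

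The paper's proof avoids all of this by looking in the opposite direction: it seeks an $S$-arc into ${\rm head}(f_1)$ whose tail lies \emph{inside} $X_1\cap Y$, so that the arc is automatically contained in $X_1\cap Y$ (hence in $X_1$), and then the $T$-membership case split is trivial. Existence is a one-line count: both $\Delta^-_S({\rm head}(f_1))$ and $\Delta^-_S(X_1\cap Y)$ have size $k$; by Lemma~\ref{lem:072} the arc $e_2$ lies in $\Delta^-_S(X_1\cap Y)$, and if ${\rm head}(e_2)\neq{\rm head}(f_1)$ it does not lie in $\Delta^-_S({\rm head}(f_1))$, so some arc of $\Delta^-_S({\rm head}(f_1))$ must fall outside $\Delta^-_S(X_1\cap Y)$, i.e.\ have tail in $X_1\cap Y$. (If ${\rm head}(e_2)={\rm head}(f_1)$, then $f'_1:=e_2$ already satisfies condition~1.) You had all the ingredients---tightness of $X_1\cap Y$, the location of $e_2$---but aimed them at the wrong target set.
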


\begin{proof}
If ${\rm head}(e_2) = {\rm head}(f_1)$, then $f'_1 := e_2$ satisfies the first condition. 
Thus, suppose that ${\rm head}(e_2) \neq {\rm head}(f_1)$. 
Then, since $\delta^-_S(X_1 \cap Y) = k$, $\delta^-_S({\rm head}(f_1)) = k$, and $e_2 \in \Delta^-_S(X_1 \cap Y) \setminus \Delta^-_S({\rm head}(f_1))$ by Lemma~\ref{lem:072}, 
we obtain $\Delta^-_S({\rm head}(f_1)) \setminus \Delta^-_S(X_1 \cap Y) \neq \emptyset$. 
Therefore, $S$ has an arc $f'_1$ with ${\rm head}(f'_1) = {\rm head}(f_1)$ such that
$f'_1 \not\in \Delta^-_S(X_1 \cap Y)$, which implies that $f'_1$ is contained in $X_1 \cap Y$ (Figure~\ref{fig:figure5}). 
Such an arc $f'_1$ satisfies one of the conditions. 
\end{proof}

Let $f'_1$ be an arc as in Lemma~\ref{lem:09} and let $S' := S - f'_1 + f_1$, which is feasible by Lemma~\ref{lem:01}. 
If $f'_1$ satisfies the first condition in the lemma (i.e., $f'_1 \in S \setminus T$), then $|S' \setminus T| = p-1$, and hence we are done. 
Thus, in what follows, we consider the case when $f'_1$ satisfies the second condition in the lemma. 
In this case, define $X'_i \subseteq V$ for each $i \in [p]$ 
as in Section~\ref{sec:keylemma}. 
Define the auxiliary digraph $H'$ associated with $S'$ and $T$ in the same way as $H$.

\begin{lemma}\label{lem:10}
Suppose that $f'_1$ satisfies the second condition in Lemma~\ref{lem:09}. 
Then, the auxiliary digraph $H'$ associated with $S'=S - f'_1 + f_1$ and $T$ has a dicycle of length at most $q -1$. 
\end{lemma}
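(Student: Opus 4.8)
The goal is to show that after the single exchange $S' = S - f'_1 + f_1$ (with $f'_1 \in S \cap T$ contained in $X_1 \cap Y$), the new auxiliary digraph $H'$ has a strictly shorter shortest dicycle. The natural candidate is to exhibit a dicycle of length at most $q-1$ in $H'$ using the vertices of $C$, namely a sub-walk through $\{1, 2, \dots, q\}$ that "shortcuts" vertex $1$. First I would use Lemma~\ref{lem:02}, which tells us $X'_1 = X_1$, so the minimal tight set at index $1$ is unchanged; the only indices whose minimal tight sets can change are $2, \dots, p$, and for each of those Lemma~\ref{lem:03} (applied with the arc $e := f'_1$, which is contained in $X_1$) gives the trichotomy: $X_i = X'_i$, or $X_i \ni f'_1$, or $X'_i \ni f'_1$.

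**Key steps.** I would analyze the arcs of $C$ one at a time. Consider the arc $(q, 1)$ of $C$: it exists because $X_q \ni e_1$. Since the head of $e_1$ equals the head of $f_1$, which lies in $X_1 \cap Y$, and since in $S'$ the arc $f_1$ plays the role that $f'_1$ played in $S$, I would argue that $X'_q$ still contains $e_1$ — or, if not, that $X_q \ni f'_1$, in which case I can instead route directly from $q$ back into the part of $Y$ near $f'_1$. The crucial point is the arc $(1,2)$ of $C$: it exists because $X_1 \ni e_2$. In $H'$ I want to replace the two-arc path $q \to 1 \to 2$ by a single arc $q \to 2$, or more generally find a short path from $q$ to some vertex of $\{2,\dots,q\}$ avoiding $1$. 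Since $f'_1 \in S \cap T$ is contained in $X_1 \cap Y = X_1 \cap (X_2 \cup \dots \cup X_q)$, by Lemma~\ref{lem:06} applied to the dipath $2 \to 3 \to \dots \to q$ there is some $i \in \{2,\dots,q\}$ with $X_i \ni f'_1$; but in $S'$ the arc $f'_1$ has been removed from $S$, so $f'_1$ is no longer an arc of $S'$, and I need to track what $e'_i$ (the $i$-th arc of $S' \setminus T$) is. Since $f'_1 \in T$, we have $S' \setminus T = (S \setminus T) \cup \{f_1\} \setminus \{\} $... more carefully, $S \setminus T = \{e_1,\dots,e_p\}$ is unchanged except that $f_1$ now joins $S'$ but $f_1 \in T$ so $f_1 \notin S' \setminus T$, and $f'_1 \in S \cap T$ so removing it does not change $S' \setminus T$; hence $S' \setminus T = S \setminus T = \{e_1, \dots, e_p\}$ and $T \setminus S' = (T \setminus S) \setminus \{f_1\} \cup \{f'_1\}$, i.e. $f_1$ is replaced by $f'_1$ as the "$f$-arc" with index $1$. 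So in $H'$, vertex $1$ corresponds to the minimal tight set $X'_1 = X_1$ containing $f'_1$ — but $f'_1 \in X_1 \cap Y$, and by the argument above $X_1 \cap Y \subsetneq X_1$ strictly (it omits $\operatorname{tail}(e_2)$ by Lemma~\ref{lem:072}), so $X'_1$ could actually be smaller than $X_1$; here I'd re-examine whether Lemma~\ref{lem:02}'s hypothesis (that $f'_1$ is contained in $X_1$) is enough, and indeed it is, giving $X'_1 = X_1$. Then using Lemma~\ref{lem:03} I classify each $i \in \{2,\dots,q\}$ and show that the dicycle $1 \to 2 \to \dots \to q \to 1$ either survives in $H'$ (impossible, as then $C$ would still be shortest, not shorter — so this case must be excluded by a minimality/parity argument) or collapses: specifically, the index $i \in \{2,\dots,q\}$ with $X_i$ or $X'_i$ containing $f'_1$ yields an arc $(i, 1)$ or $(1, \text{something})$ that creates a shorter closed walk through $\{1,\dots,q\}$, hence a dicycle of length $\le q-1$.

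**Main obstacle.** The delicate part is the bookkeeping of which minimal tight sets change and in which direction, and then assembling the surviving arcs of $H'$ into a closed walk on a proper subset of $\{1,\dots,q\}$ (a closed walk of length $<q$ on these vertices contains a dicycle of length $<q$, since a shortest closed walk through a vertex is a dicycle). Concretely, I expect the hard case to be when $X'_i \ni f'_1$ for the relevant $i$: there I must produce an arc of $H'$ from $i$ into a vertex $j$ with $X'_j \ni e_i$ that lets me close up, and I will need Lemma~\ref{lem:03} once more, plus the tightness-union arguments of Lemmas~\ref{lem:tight}, \ref{lem:05}, and~\ref{lem:06}, now applied to $S'$. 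The cleanest route is probably: show $(i, 1) \in A_{H'}$ for the special $i$ (because $X'_i \supseteq X_i \cap$ something, or by Lemma~\ref{lem:03} case analysis $X_i \ni f'_1$ combined with $f'_1 \in X_1 = X'_1$, and $e'_1$-considerations), then the walk $1 \to 2 \to \dots \to i \to 1$ has length $i \le q$, and strict inequality $i \le q-1$ follows because $i = q$ would force $X_q \ni f'_1$, contradicting that $f'_1 \in X_1 \cap Y$ with $X_q \subseteq Y$ already tight and the minimality of $X_1$ (or directly contradicting Lemma~\ref{lem:071}-style reasoning). I would lay out the trichotomy of Lemma~\ref{lem:03} as three labeled cases and in each exhibit the shortcut explicitly, being careful that the total walk never re-traverses a forbidden vertex and has length at most $q-1$.
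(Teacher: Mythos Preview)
Your proposal has two genuine gaps. First, you never establish that some $X_i$ actually changes: you write that the case where all $X_i = X'_i$ for $i \in [q]$ ``must be excluded by a minimality/parity argument,'' but no such argument is available as stated --- if every $X_i$ is unchanged then $H'$ simply contains the same dicycle $C$ of length $q$, which is consistent, not contradictory. The paper handles this by a direct count: if $X'_i = X_i$ for all $i \in [q]$, then $Y = X_2 \cup \dots \cup X_q$ is tight with respect to both $S$ and $S'$, yet $\Delta^-_{S'}(Y) = \Delta^-_S(Y) \cup \{f_1\}$ (using Lemma~\ref{lem:08} and the fact that $f'_1$ is contained in $Y$ under the second condition of Lemma~\ref{lem:09}), a contradiction.

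Second, your choice $e = f'_1$ in Lemma~\ref{lem:03} is a dead end: knowing that $X_i$ or $X'_i$ contains $f'_1$ does not produce an arc of $H'$, whose arcs are determined by containment of some $e_j \in S' \setminus T = \{e_1, \dots, e_p\}$. The paper instead applies Lemma~\ref{lem:03} with $e = e_2$ (which lies in $X_1$ because $(1,2)$ is an arc of $C$). Taking the \emph{minimal} $i \in [q]$ with $X_i \neq X'_i$, one has $i \ge 2$ by Lemma~\ref{lem:02}, and $X_i$ cannot contain $e_2$ (else $(i,2)$ would already be an arc of $H$, yielding a dicycle shorter than $C$). The trichotomy then forces $X'_i \ni e_2$, so $(i,2)$ is an arc of $H'$; since $X'_j = X_j$ for $2 \le j < i$, the arcs $(2,3), \dots, (i-1,i)$ persist in $H'$, and the dicycle $2 \to 3 \to \cdots \to i \to 2$ has length $i-1 \le q-1$. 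This cycle avoids vertex $1$ entirely, so your attempt to route through vertex $1$ and the accompanying worry about excluding $i=q$ (your argument there, incidentally, is incorrect: $X_q \ni f'_1$ is perfectly compatible with $f'_1 \in X_1 \cap Y$) are both unnecessary.
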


\begin{proof}
We first show that $X_i \neq X'_i$ for some $i \in [q]$. 
Assume to the contrary that $X_i = X'_i$ for each $i \in [q]$. 
Then, we see that $Y = X_2 \cup \dots \cup X_q$   
is a tight set with respect to $S$, and we also see that $Y = X'_2 \cup \dots \cup X'_q$ is tight with respect to $S'$. 
This shows that $\delta^-_S(Y) = k = \delta^-_{S'}(Y)$.  
However, we obtain $\Delta^-_{S'}(Y) = \Delta^-_S(Y) \cup \{ f_1 \}$ by 
Lemma~\ref{lem:08} and by the second condition in Lemma~\ref{lem:09}, 
which is a contradiction.

Therefore, $X_i \neq X'_i$ for some $i \in [q]$. 
Let $i$ be the minimal index with $X_i \neq X'_i$, where we note that $i \ge 2$ by Lemma~\ref{lem:02}. 
Since $C$ is a shortest dicycle, $H$ does not contain an arc $(i, 2)$, that is, $X_i$ does not contain $e_2$. 
As $X_i \neq X_i'$ and $X_i$ does not contain $e_2$, by applying Lemma~\ref{lem:03} with $e=e_2$, 
we see that $X'_i$ contains $e_2$, which means that $H'$ contains an arc $(i, 2)$. 
By the minimality of $i$, $X'_j = X_j$ holds for $j \in \{2, \dots , i-1\}$, and hence 
$H'$ contains a dicycle $C'$ traversing $2, 3, \dots, i$ in this order. Since the length of $C'$ is at most $q-1$, this completes the proof. 
\end{proof}

\subsection{Putting Them Together}
\label{sec:together}

By the above lemmas, we obtain Proposition~\ref{prop:reducedifference} as follows.  
Suppose that $S \subseteq A$ and $T \subseteq A$ are feasible arc subsets and 
$H$ is the auxiliary digraph associated with $S$ and $T$.
If $H$ has a self-loop incident to $i \in V_H$, then $S' := S - e_i + f_i$ satisfies the conditions in Proposition~\ref{prop:reducedifference}. 
Otherwise, let $q$ be the length of a shortest dicycle in $H$ and let $f'_1 \in S$ be an arc satisfying the condition in Lemma~\ref{lem:09}. 
By the description just after Lemma~\ref{lem:09} and by Lemma~\ref{lem:10}, 
$S' := S - f'_1 + f_1$ satisfies the conditions in Proposition~\ref{prop:reducedifference}
or the auxiliary digraph $H'$ associated with $S'$ and $T$ has a dicycle of length at most $q -1$. 
Since the shortest dicycle length decreases monotonically, 
by applying such a transformation of $S$ at most $q$ times, we obtain a feasible arc subset $S'$ satisfying the conditions in Proposition~\ref{prop:reducedifference}.
This completes the proof of Proposition~\ref{prop:reducedifference}. 
Furthermore, by applying Proposition~\ref{prop:reducedifference} $p$ times, we obtain Theorem~\ref{thm:main}. 
Note that since all the proofs are constructive and each $X_i$ can be computed by using a minimum $s$-$t$ cut algorithm, the reconfiguration sequence can be computed in polynomial time.

\section{Upper Bound on the Sequence Length}
\label{sec:upperbound}

In this section, we give an upper bound on the length of a shortest reconfiguration sequence satisfying the conditions in Theorem~\ref{thm:main}. 
To this end, we first give an upper bound on the length of a shortest dicycle in the auxiliary graph $H$. 

\begin{lemma}\label{lem:dicyclelength} 
    The length $q$ of a shortest dicycle in the auxiliary graph $H$ (see Section~\ref{sec:shortestdicycle}) is at most 
    $\min\{|S \setminus T|,  k\}$. 
\end{lemma}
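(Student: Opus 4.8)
The plan is to bound the shortest dicycle length $q$ by two separate arguments, one giving $q \le |S \setminus T|$ and the other giving $q \le k$. The first bound is immediate: the auxiliary digraph $H$ has vertex set $[p]$ with $p = |S \setminus T|$, and a dicycle in a digraph on $p$ vertices has length at most $p$, so in particular the shortest one does. This needs no real work beyond citing the construction of $H$ in Section~\ref{sec:auxiliary}.

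For the bound $q \le k$, I would argue by contradiction: suppose $C$ is a shortest dicycle and $q \ge k+1$. As in Section~\ref{sec:shortestdicycle}, rename indices so that $C$ traverses $1, 2, \dots, q$ in this order, and recall that $q \ge 2$ forces every $X_i \ne V$. The key is to exhibit $q$ arcs of $S$ entering a single tight set, contradicting tightness. Specifically, consider $Y := \bigcup_{i=1}^{q} X_i$; by Lemmas~\ref{lem:tight} and~\ref{lem:05} (applied repeatedly around the cycle, using that consecutive $X_i$'s intersect), $Y$ is tight with respect to $S$, so $\delta^-_S(Y) = k$. Now for each $i \in [q]$, the arc $f_i$ lies inside $X_i \subseteq Y$, so $f_i$ does not enter $Y$; since $\mathrm{head}(e_i) = \mathrm{head}(f_i) \in Y$, the arc $e_i$ either enters $Y$ or lies inside $Y$. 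The plan is to show that each $e_i$ in fact enters $Y$: if $e_i$ were contained in $Y$, then by Lemma~\ref{lem:06} applied to a dipath in $H$ covering all of $C$, some $X_j$ ($j \in [q]$) would contain $e_i$, creating an arc $(j,i)$ in $H$ and hence a dicycle through the $C$-arcs from $i$ around to $j$ that is strictly shorter than $C$ (this is the same shortcutting argument as in Lemma~\ref{lem:071}), a contradiction. Therefore $e_1, \dots, e_q$ are $q$ distinct arcs of $S$ all entering $Y$, giving $k = \delta^-_S(Y) \ge q \ge k+1$, a contradiction. Hence $q \le k$.

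The main obstacle I anticipate is getting the shortcutting argument for ``$e_i$ enters $Y$'' exactly right: one must be careful that the dipath in $H$ to which Lemma~\ref{lem:06} is applied genuinely covers every $X_j$ with $j \in [q]$ (taking the whole cycle $C$ viewed as a dipath from $1$ back to, say, $1$, or more cleanly a dipath on the $q$ vertices obtained by cutting $C$ at one edge), and that the resulting index $j$ with $e_i \in X_j$ really does produce a dicycle shorter than $q$ rather than one of the same length. A clean way to handle this uniformly is to note that for each $i$, the set $\bigcup_{j \ne i} X_j$ is tight and $X_i$ intersects it, so by Lemma~\ref{lem:tight} there is no $S$-arc between $X_i \setminus \bigcup_{j\ne i}X_j$ and the rest; combined with the minimality argument behind Lemma~\ref{lem:072}, this pins down that $e_i$ must leave $Y$. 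Once that is established, the counting contradiction is one line, and combining with $q \le p$ finishes the proof.
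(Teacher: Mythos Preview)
Your plan for the bound $q \le k$ has a genuine gap, and the issue you yourself flag as the ``main obstacle'' is in fact fatal. Write $Z = X_1 \cup \dots \cup X_q$ (your $Y$; note the paper reserves $Y$ for $X_2\cup\dots\cup X_q$). You want to show that each $e_i$ enters $Z$, but this is simply false: since $C$ contains the arc $(i-1,i)$ (indices mod $q$), by the definition of $H$ the set $X_{i-1}$ contains $e_i$, so $e_i$ lies inside $Z$. Your shortcutting contradiction does not fire because Lemma~\ref{lem:06} only guarantees that \emph{some} $X_j$ with $j\in[q]$ contains $e_i$, and the choice $j=i-1$ always works and reproduces $C$ itself rather than a shorter dicycle. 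The alternative fix you sketch (using Lemma~\ref{lem:tight} on $X_i$ and $\bigcup_{j\ne i}X_j$) cannot rescue the conclusion ``$e_i$ leaves $Z$'' either, because that conclusion is false.

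The paper's proof takes a different route: it does not use the arcs $e_i$ to witness entries into $Z$, but instead manufactures \emph{new} arcs $e'_1,\dots,e'_q\in S$. For each $i$, set $Y_i=\bigcup_{j\ne i}X_j$; both $Y_i$ and $Z$ are tight, and (by the argument of Lemmas~\ref{lem:071}--\ref{lem:072} applied with the indices cyclically shifted) the arc $e_{i+1}$ lies in $\Delta^-_S(Y_i)\setminus\Delta^-_S(Z)$. Equality $|\Delta^-_S(Y_i)|=|\Delta^-_S(Z)|=k$ then forces some arc $e'_i\in\Delta^-_S(Z)\setminus\Delta^-_S(Y_i)$, i.e., an arc from $V\setminus Z$ into the private region $X_i\setminus Y_i$. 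These $e'_i$ have heads in pairwise disjoint regions, hence are $q$ distinct arcs in $\Delta^-_S(Z)$, giving $q\le\delta^-_S(Z)=k$.
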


\begin{proof}
 We use the notation introduced in Section~\ref{sec:shortestdicycle}. 
 Since it is obvious that each dicycle in $H$ has length at most $|S \setminus T|$, it suffices to show that $q \le k$. 
 When $q=1$, the claim is obvious. Suppose that $q \ge 2$. 
 Let $Z = X_1 \cup \dots \cup X_q$. 
 Since $Y$ and $Z = X_1 \cup Y$ are tight sets with respect to $S$ and $e_2 \in \Delta^-_S(Y) \setminus \Delta^-_S(Z)$ by Lemma~\ref{lem:072}, 
 there exists an edge $e'_1$ in $\Delta^-_S(Z)\setminus \Delta^-_S(Y)$. 
 This means that $e'_1$ is from $V \setminus Z$ to $X_1 \setminus Y$. (Figure~\ref{fig:figure6}). 
 By the same argument, for $i\in [q]$, 
 $S$ has an arc $e'_i$ that is from $V \setminus Z$ to $X_i \setminus \bigcup_{j\neq i} X_j$. 
 Since $e'_1, \dots , e'_q$ are distinct arcs in $S$, we obtain $\delta^-_S(Z) \ge q$. 
 This shows that $q \le k$, as $Z$ is a tight set. 
\end{proof}

\begin{figure}
    \centering
    \includegraphics[width=60mm]{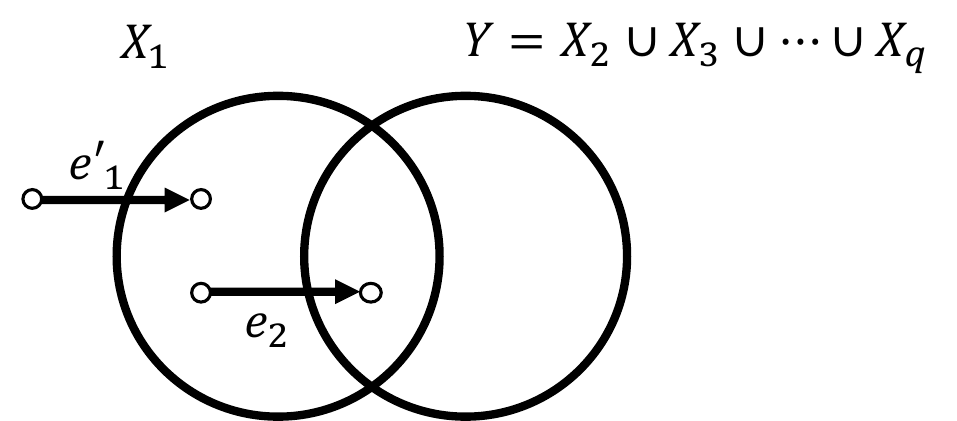}
    \caption{Arc $e'_1$ from $V \setminus Z$ to $X_1 \setminus Y$.}
    \label{fig:figure6}
\end{figure}

Using this lemma, we give an upper bound on the reconfiguration sequence. 

\begin{theorem}
\label{thm:length}
    There is a reconfiguration sequence satisfying the conditions in Theorem~\ref{thm:main} whose length is at most
    \[
    \begin{cases}
      \frac{p (p+1)}{2}  & \mbox{if $p \le k$,} \\
      \frac{k (k+1)}{2} + (p-k) k  & \mbox{if $p > k$,} \\      
    \end{cases}
    \]    
    where  $p=|S \setminus T|$. 
\end{theorem}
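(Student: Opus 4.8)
The plan is to bound the total length of the reconfiguration sequence obtained from the proof of Proposition~\ref{prop:reducedifference}, by counting how many exchange operations are needed to decrease $|S \setminus T|$ by one, summed over the $p$ phases. First I would recall the structure: in each phase, starting from a feasible arc set with $|S \setminus T| = j$ (for $j$ running from $p$ down to $1$), we repeatedly perform the transformation $S \mapsto S - f'_1 + f_1$ of Section~\ref{sec:shortestdicycle}; by Lemma~\ref{lem:10} the shortest dicycle length $q$ in the auxiliary digraph strictly decreases after each such step (or we finish the phase immediately), and once $q = 1$ a single further exchange completes the phase. Hence the number of exchanges used in a phase that begins with shortest dicycle length $q_0$ is at most $q_0$. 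By Lemma~\ref{lem:dicyclelength}, $q_0 \le \min\{j, k\}$ when the current difference is $j$.

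Next I would simply sum these per-phase bounds. As the difference $j$ decreases from $p$ to $1$, the phase that reduces the difference from $j$ to $j-1$ contributes at most $\min\{j,k\}$ exchanges. Therefore the total length is at most
\[
\sum_{j=1}^{p} \min\{j, k\}.
\]
When $p \le k$ every term is just $j$, giving $\sum_{j=1}^{p} j = \frac{p(p+1)}{2}$. When $p > k$, the terms with $j \le k$ contribute $\sum_{j=1}^{k} j = \frac{k(k+1)}{2}$, and the $p - k$ terms with $j > k$ each contribute $k$, giving an additional $(p-k)k$; summing yields the stated bound $\frac{k(k+1)}{2} + (p-k)k$. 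This is a routine finite summation once the per-phase bound is in place.

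The only point requiring a little care — and what I expect to be the main (though modest) obstacle — is justifying cleanly that \emph{one} phase uses at most $\min\{j,k\}$ exchanges, i.e. that the argument of Section~\ref{sec:together} really costs one exchange per unit drop of $q$ and no more. Concretely: if $H$ has a self-loop the phase costs $1$ exchange, which is $\le \min\{j,k\}$; otherwise the first exchange either already reduces $|S \setminus T|$ (phase done, $1 \le \min\{j,k\}$ exchange) or produces $H'$ with shortest dicycle length $\le q-1$ by Lemma~\ref{lem:10}, and we iterate. Since $q \le \min\{j,k\}$ at the start of the phase by Lemma~\ref{lem:dicyclelength}, and each iteration is one exchange that strictly decreases the shortest dicycle length until it hits $1$ (where one last exchange ends the phase), the phase uses at most $q \le \min\{j,k\}$ exchanges in total. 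Combining this with the summation above completes the proof.
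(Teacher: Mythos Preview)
Your proposal is correct and follows essentially the same approach as the paper: bound each phase (reducing $|S\setminus T|$ from $j$ to $j-1$) by the initial shortest dicycle length $q \le \min\{j,k\}$ via Lemma~\ref{lem:dicyclelength} and the argument of Section~\ref{sec:together}, then sum $\sum_{j=1}^{p}\min\{j,k\}$ to obtain the stated bound. Your extra care in spelling out why a phase costs at most $q$ exchanges (one exchange per unit drop of the shortest dicycle length, with the final self-loop step included) matches exactly what the paper sketches.
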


\begin{proof}
    As discussed in Section~\ref{sec:together}, we can decrease the value $|S \setminus T|$ by one by executing at most $q$ steps, 
    where $q$ is the length of a shortest dicycle in $H$ 
    and $q \le \min\{|S \setminus T|, k\}$ by Lemma~\ref{lem:dicyclelength}.   
    To obtain a sequence in Theorem~\ref{thm:main}, we apply this procedure $p$ times in which $S$ is replaced with an updated feasible solution $S'$.  
    Since $|S' \setminus T|$ takes the values $p, p-1, \dots , 1$, the total number of steps is at most $\sum_{i=1}^{p} \min\{i, k\}$, which is the desired value. 
\end{proof}

\section{Extension to Arborescences with Distinct Roots}
\label{sec:distinctroot} 

In this section, we prove Theorem~\ref{thm:main2}.
That is, we extend Theorem~\ref{thm:main} to the case when a feasible arc set 
is the union of $k$ arc-disjoint arborescences that may have distinct roots.

\begin{proof}[Proof of Theorem~\ref{thm:main2}]

Extend $V$ by adding a new vertex $\widehat{r}$. For an arc set $A' \subseteq A$ satisfying $\delta^{-}_{A'}(v) \le k$ for each $v \in V$, let $\widehat{A'}$ denote the arc set of the digraph obtained from $A'$ by adding $k-\delta^{-}_{A'}(v)$ parallel arcs from $\widehat{r}$ to $v$ for each $v \in V$.  Observe that $A' \in \mathcal{F}_k$ holds if and only if $\widehat{r}$ has outdegree $k$ in $\widehat{A'}$ and $\widehat{A'}$ can be partitioned into $k$ arc-disjoint $\widehat{r}$-arborescences on $V+\widehat{r}$. By Theorem~\ref{thm:edmonds}, the latter is equivalent to that $\delta^{-}_{\widehat{A'}}(X) \ge k$ for any $X \subseteq V$ with $X \ne \emptyset$.

We consider the case first when the multisets of roots in the decompositions of $S$ and $T$ into $k$ arc-disjoint arborescences are not the same.

\begin{lemma} \label{lem:changeroots}
    Suppose that $\delta^{-}_S(v) \ne \delta^{-}_T(v)$ holds for a vertex $v \in V$. Then there exist arcs $e \in S \setminus T$ and $f \in T \setminus S$ such that $S-e+f \in \mathcal{F}_k$.
\end{lemma}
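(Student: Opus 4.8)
The plan is to work in the extended digraph $\widehat{D}$ on vertex set $V + \widehat{r}$, where the condition $A' \in \mathcal{F}_k$ becomes: $\widehat{r}$ has outdegree $k$ and $\delta^-_{\widehat{A'}}(X) \ge k$ for every nonempty $X \subseteq V$. Note that for $A' \in \mathcal{F}_k$, the quantity $\delta^-_{A'}(v)$ for $v \in V$ equals $k$ minus the number of arborescences in the decomposition rooted at $v$; so the hypothesis $\delta^-_S(v) \ne \delta^-_T(v)$ says exactly that the root multisets differ at $v$. Without loss of generality I would assume $\delta^-_S(v) < \delta^-_T(v)$, i.e. $S$ uses $v$ as a root more often than $T$ does. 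In $\widehat{D}$ this means $\delta^-_{\widehat{S}}(v) < \delta^-_{\widehat{T}}(v)$ after we account only for the real arcs — equivalently, $S$ has strictly fewer real arcs entering $v$ than $T$ does, so there is an arc $f \in T \setminus S$ with $\mathrm{head}(f) = v$.

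\textbf{Key steps.} First I would fix such an arc $f \in T \setminus S$ with $\mathrm{head}(f) = v$. The goal is to find $e \in S \setminus T$ with $S - e + f \in \mathcal{F}_k$; by the characterization above it suffices to keep $\delta^-_{\cdot}(u) \le k$ everywhere (automatic, since we are only adding one arc into $v$ and $\delta^-_S(v) < k$), keep $\delta^-_{\widehat{\cdot}}(X) \ge k$ for all nonempty $X \subseteq V$, and remove some $e \in S \setminus T$ rather than one in $S \cap T$. Mimicking the proof of Lemma~\ref{lem:01} (with $\widehat{S}$ in place of $S$): if simply adding $f$ to $S$ keeps feasibility we could then remove any suitable arc; more carefully, I would consider the inclusionwise minimal tight set $X_f$ (with respect to $\widehat{S}$) containing both endpoints of $f$, using Lemma~\ref{lem:tight} for uniqueness, and argue that every arc $e \in S$ with $\mathrm{head}(e) \in X_f$ and $\mathrm{tail}(e) \in X_f$ can be replaced by $f$ — exactly the argument of Lemma~\ref{lem:01}, since a violated set after the exchange would be a tight set containing $f$ strictly inside $X_f$. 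Then the remaining task is to exhibit such an $e$ that moreover lies in $S \setminus T$.

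\textbf{The main obstacle} will be this last point: ensuring the removed arc can be chosen from $S \setminus T$. I would argue it by a counting/comparison with $T$. Since $T \in \mathcal{F}_k$, we have $\delta^-_{\widehat{T}}(X_f) \ge k$, and $f \in \Delta^-_{\widehat{T}}(X_f)$ is not in $\widehat{S}$; meanwhile $X_f$ is tight with respect to $\widehat{S}$, so $\delta^-_{\widehat{S}}(X_f) = k$. Comparing the arcs of $T$ and of $S$ entering $X_f$ — and using that arcs of $S \cap T$ contribute equally to both — a symmetric-difference count should force $S \setminus T$ to contain an arc $e$ entering $X_f$; since such $e$ has $\mathrm{head}(e) \in X_f$, if also $\mathrm{tail}(e) \in X_f$ we are done, and otherwise $e \in \Delta^-_{\widehat{S}}(X_f)$ with $e \notin \widehat{T}$ — here I would need to be slightly careful about the artificial arcs from $\widehat{r}$, but since $\widehat{r} \notin V$ and $f$ is a real arc, the set $X_f \subseteq V$ and these artificial arcs are handled uniformly by Theorem~\ref{thm:edmonds}. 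If no arc of $S \setminus T$ is \emph{contained} in $X_f$, I would instead take the minimal tight set and iterate, or directly observe that tightness of $X_f$ together with $\delta^-_{\widehat T}(X_f)>\delta^-_{\widehat S}(X_f)$ (which holds because $f$ enters but more generally the root imbalance propagates) yields the contradiction. The cleanest route is likely: choose $f$ with $\mathrm{head}(f)=v$, let $X_f$ be minimal tight in $\widehat S$ containing $f$, use Lemma~\ref{lem:01}-style reasoning to know any $e \in S$ inside $X_f$ works, and use a global degree count (summing $\delta^-$ over a laminar family, or simply $|S|=|T|$ together with the local deficiency at $v$) to locate the required $e \in S \setminus T$ inside $X_f$.
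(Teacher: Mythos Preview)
Your overall plan matches the paper's proof: pass to $\widehat{D}$, pick $v$ with $\delta^-_S(v) < \delta^-_T(v)$ and an arc $f \in T\setminus S$ with $\mathrm{head}(f)=v$, take the minimal $\widehat{S}$-tight set $X_f$ containing $f$, and show that any $e \in S$ contained in $X_f$ gives $S-e+f \in \mathcal{F}_k$. Your justification of that last feasibility claim is correct in spirit (a violated set would be $\widehat{S}$-tight, contain $f$, and miss $e$, contradicting minimality of $X_f$), though note it is not literally ``the argument of Lemma~\ref{lem:01}'': here $\mathrm{head}(e)$ need not equal $\mathrm{head}(f)$, and the reason this still works is that passing to $\widehat{S'}$ adds the compensating arc $(\widehat{r},\mathrm{head}(e))$ and removes $(\widehat{r},v)$.

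The genuine gap is in your step locating an arc $e \in S\setminus T$ \emph{contained in} $X_f$. Your proposed count compares arcs \emph{entering} $X_f$ and asserts $f \in \Delta^-_{\widehat{T}}(X_f)$; but $f$ is, by definition of $X_f$, contained in $X_f$, so $f \notin \Delta^-_{\widehat{T}}(X_f)$. Comparing $\delta^-_{\widehat{T}}(X_f) \ge k = \delta^-_{\widehat{S}}(X_f)$ gives no information about arcs of $S\setminus T$ entering $X_f$, and in any case an arc entering $X_f$ is not contained in it, so would not be usable. The paper's (clean) fix is to count arcs \emph{inside} $X_f$: since every $w \in V$ has $\delta^-_{\widehat{S}}(w)=\delta^-_{\widehat{T}}(w)=k$, one gets
\[
\delta_S(X_f,X_f)=k|X_f|-\delta^-_{\widehat{S}}(X_f)=k|X_f|-k \ \ge\ k|X_f|-\delta^-_{\widehat{T}}(X_f)=\delta_T(X_f,X_f).
\]
Thus $X_f$ contains at least as many $S$-arcs as $T$-arcs; since it contains $f \in T\setminus S$, it must contain some $e \in S\setminus T$. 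Replace your entering-arc count with this inside-arc count and the proof goes through. (One small omission: you should also note that a tight set containing $f$ always exists, since $\delta^-_{\widehat{S}}(V)=k$.)
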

\begin{proof}
Since $\sum_{w\in V} \delta^{-}_{S}(w) = \sum_{w \in V} \delta^{-}_{T}(w)$, there is a vertex $v \in V$ with $\delta^{-}_S(v) < \delta^{-}_T(v)$.  Then there exists an arc $f \in T\setminus S$ with ${\rm head}(f) = v$. Let $X$ denote the unique minimal subset of $V$ containing $f$ which is tight with respect to $\widehat{S}$, i.e., $\delta^{-}_{\widehat{S}}(X) = k$. Note that such a tight set exists as $\delta^{-}_{\widehat{S}}(V) = k$. Since $\delta^{-}_{\widehat{S}}(w) = \delta^{-}_{\widehat{T}}(w)=k$ for any $w \in V$ and $\delta^{-}_{\widehat{T}}(X) \ge k = \delta^{-}_{\widehat{S}}(X)$, 
\[\Delta_{T}(X,X) = k|X|-\delta^{-}_{\widehat{T}}(X) \le k|X| - \delta^{-}_{\widehat{S}}(X) = \Delta_{S}(X,X).\]
Therefore, $X$ contains an arc $e \in S \setminus T$, since it contains the arc $f \in T \setminus S$. 

We claim that $S' = S-e+f \in \mathcal{F}_k$. Let $u = {\rm head}(e)$, then
\[\widehat{S'} = \widehat{S}-e+(\widehat{r},u)+f-(\widehat{r},v).\]
We easily see that $\delta^{-}_{S'}(w) \le k$ holds for any $w \in V$, as $\delta^{-}_{S'}(v) \le \delta^{-}_S(v)+1 \le \delta^{-}_T(v) \le k$. Thus, it suffices to show that $\delta^{-}_{\widehat{S'}}(Z) \ge k$ holds for any nonempty subset $Z \subseteq V$.
Assume to the contrary that there exists a nonempty subset $Z \subseteq V$ with $\delta^{-}_{\widehat{S'}}(Z) \le k-1$. Then
\[k \le \delta^-_{\widehat{S}}(Z) \le \delta^-_{\widehat{S}-e+(\widehat{r},u)}(Z) \le 
\delta^-_{\widehat{S'}}(Z)+1 \le (k-1)+1 = k,\]
thus $ \delta^{-}_{\widehat{S}}(Z) = k$ and  $\delta^{-}_{\widehat{S}}(Z) = \delta^{-}_{\widehat{S}-e+(\widehat{r},u)}(Z) = \delta^{-}_{\widehat{S'}}(Z)+1$. These show that $Z$ is a tight set with respect to $\widehat{S}$, it does not contain $e$, and it contains $f$. Since $X \cap Z$ is a tight set with respect to $\widehat{S}$ by Lemma~\ref{lem:tight} and $X \cap Z \subsetneq X$ as $Z$ does not contain $e$, this contradicts the minimality of $X$.  
\end{proof}

We turn to the proof of the theorem. By the repeated application of Lemma~\ref{lem:changeroots}, there exists a sequence $T_0, T_1, \dots, T_m$ such that $T_0=S$, $\delta^-_{T_m}(v) = \delta^-_T(v)$ for any $v \in V$, $T_i \in \mathcal{F}_k$ for $i\in [m]\cup\{0\}$ and $|T_{i-1}\setminus T_i| = |T_i \setminus T_{i-1}| = 1$ for $i \in [m]$. By Theorem~\ref{thm:main}, there exists a sequence $T'_m$, $T'_{m+1}, \dots, T'_\ell$ such that $T'_m = \widehat{T_m}$, $T'_\ell = \widehat{T}$, $T'_i$ is a subset of $\widehat{T'_m}\cup\widehat{T}$ which can be partitioned into $k$ arc-disjoint $\widehat{r}$-arborescences on $V+\widehat{r}$ for $i \in \{m, m+1, \dots, \ell\}$, and $|T'_{i-1}\setminus T'_i| = |T'_i \setminus T'_{i-1}| = 1$ for $i \in \{m+1,m+2,\dots, \ell\}$. Then for any $i \in \{m+1,m+2,\dots, \ell\}$ there is an arc set $T_i \subseteq A$ such that $T'_i = \widehat{T_i}$, as $\delta^-_{T'_i}(v) = k$ for any $v \in V$. Since $T'_i \subseteq \widehat{T'_m} \cup \widehat{T}$ and $T'_i$ can be partitioned into $k$ arc-disjoint $\widehat{r}$-arborescences, $\widehat{r}$ has outdegree $k$ in $T'_i$, and $T_i$ can be partitioned into $k$ arc-disjoint arborescences. Therefore, $T_i \in \mathcal{F}_k$ holds for $i \in \{m+1, m+2, \dots, \ell\}$, hence the sequence $T_0,T_1,\dots,T_\ell$ satisfies the properties required by the theorem.
\end{proof}

The above proof shows that the length of a shortest reconfiguration sequence in Theorem~\ref{thm:main2} 
has the same upper bound as in Theorem~\ref{thm:length}.

\section{Proof of Theorem~\ref{thm:sumcounterexample}}
\label{sec:sumcounterexample}

In this section, we give a proof of Theorem~\ref{thm:sumcounterexample}, which we restate here. 

\sumcounterexample*

\begin{proof}
Consider the matroids $M_1 = (E,\mathcal{B}_1)$ and $M_2 = (E, \mathcal{B}_2)$ on ground set $E = \{a,b,c_1,c_2,c_3,d_1,d_2,d_3\}$ defined by their families of bases
\begin{align*}
\mathcal{B}_1 & = \{B \subseteq E \mid |B| = 3, |B\cap \{c_1,c_2,c_3\}| \le 1, |B \cap \{d_1,d_2,d_3\}| \le 1\}, \\
\mathcal{B}_2 & = \{B \subseteq E \mid |B| = 3, |B \cap \{a, c_1, d_1\}| = 1\}.
\end{align*}
Note that $M_1$ is the truncation of the direct sum of the uniform matroids of rank 1 on $\{a\}$, $\{b\}$, $\{c_1,c_2,c_3\}$, and $\{d_1,d_2,d_3\}$, while $M_2$ is the direct sum of the uniform matroid of rank 1 on $\{a,c_1,d_1\}$ and the uniform matroid of rank 2 on $\{b,c_2,c_3,d_2,d_3\}$.

We prove that the pair $(M_1,M_2)$ satisfies \ref{it:rcb}. The common bases of $M_1$ and $M_2$ are the sets of the form
\[\{a,b,c_i\}, \{a,b,d_j\}, \{a,c_i,d_j\}, \{b,c_1,d_j\}, \{b,c_i,d_1\}\]
for $i,j\in \{2,3\}$. It is enough to show the existence of a reconfiguration sequence between $\{b,c_1,d_2\}$ and each $B\in \mathcal{B}_1 \cap \mathcal{B}_2$. For $i,j\in\{2,3\}$, consider the sequence of common bases
\[\{b,c_1,d_2\}, \{b,c_1,d_j\}, \{a,b,d_j\}, \{a,c_i,d_j\}, \{a,b,c_i\}, \{b,c_i,d_j\},\]
where we omit the second term for $j=2$. 
This sequence starts from $\{b,c_1,d_2\}$, contains each $B \in \mathcal{B}_1 \cap \mathcal{B}_2$ for appropriate values of $i, j \in \{2,3\}$, and $|B' \setminus B''| = |B'' \setminus B'| = 1$ holds for each pair of adjacent terms $B', B''$ of the sequence, thus it proves our claim.

Next we show that the matroids $2M_1 = (E,\mathcal{B}^2_1)$ and $2M_2  = (E,\mathcal{B}^2_2)$ do not satisfy \ref{it:rcb}. 
Recall that $2 M_i$ is the matroid whose bases are the unions of two disjoint bases of $M_i$. 
We have
\begin{align*}
    \mathcal{B}_1^2 & = \{B \subseteq E \mid \{a, b\} \subseteq B, |B \cap \{c_1, c_2, c_3\}| = 2, |B \cap \{d_1, d_2, d_3\}| = 2\}, \\
    \mathcal{B}_2^2 & = \{B \subseteq E \mid |B \cap \{a, c_1, d_1\}| = 2, |B \cap \{b,c_2,c_3,d_2,d_3\}| = 4\},
\end{align*}
thus \[\mathcal{B}_1^2 \cap \mathcal{B}_2^2 = \{\{a,b,c_1,c_i,d_2,d_3\} \mid i \in \{2,3\}\} \cup \{\{a,b,c_2,c_3,d_1,d_j\} \mid j \in \{2,3\}\}.\]
Since $|B \setminus B'| = |B' \setminus B| = 2$ for any $B \in \{\{a,b,c_1,c_i,d_2,d_3\} \mid i \in \{2,3\}\}$ and $B' \in \{\{a,b,c_2,c_3,d_1,d_j\} \mid j \in \{2,3\}\}$, the pair $(2M_1, 2M_2)$ does not satisfy \ref{it:rcb}. 
\end{proof}

\section{Exact Arborescence Packing Problem}
\label{sec:exactpoly}

In this section, as an application of Theorem~\ref{thm:main2}, we present a polynomial-time algorithm 
for the \emph{exact arborescence packing problem}. 
Suppose we are given a digraph $D=(V, A)$ whose every arc is colored either red or blue, 
and we are also given positive integers $k$ and $p$. 
Let $\mathcal{F}_{k} \subseteq 2^A$ denote the family of all arc subsets that 
can be partitioned into $k$ arc-disjoint arborescences.
The \emph{exact arborescence packing problem} asks for 
an arc set $T \in \mathcal{F}_{k}$ that has exactly $p$ red arcs if one exists.
We obtain the following result as a corollary of Theorem~\ref{thm:main2}. 
Note that a similar result is obtained for $\mathcal{F}_{k,r}$ by using Theorem~\ref{thm:main}. 
When $k=1$, this corresponds to a result of Barahona and Pulleyblank~\cite{barahona1987exact}.

\begin{theorem}
    \label{cor:exactpoly}
    The exact arborescence packing problem can be solved in polynomial time. 
\end{theorem}

\begin{proof}
Define a weight function $w\colon A \to \{0, 1\}$ as follows: $w(e)=0$ if $e$ is blue, and $w(e)=1$ if $e$ is red. 
For an arc subset $T \subseteq A$, we denote $w(T) = \sum_{e \in T} w(e)$. 
Since $\mathcal{F}_{k}$ is represented as the intersection of basis families of two matroids, 
we can find an arc set $T_{\rm min}$ (resp.~$T_{\rm max})$ in $\mathcal{F}_k$ that minimizes (resp.~maximizes) the total weight
by using a weighted matroid intersection algorithm (see e.g.,~\cite{lexbook}). 

If $w(T_{\rm min}) > p$ or $w(T_{\rm max}) < p$, then there exists no arc set $T \in \mathcal{F}_k$ with $w(T)=p$. 
Therefore, we can conclude that a desired arc set does not exist. 

We now show that a desired arc set can be found if $w(T_{\rm min}) \le p \le w(T_{\rm max})$. 
By Theorem~\ref{thm:main2}, we can find a sequence 
$T_0, T_1, \dots , T_\ell$ such that $T_0 = T_{\rm min}$, $T_\ell = T_{\rm max}$,  
$T_i \in \mathcal{F}_{k}$ for $i \in [\ell] \cup \{0\}$, and $|T_{i-1} \setminus T_i| = |T_{i} \setminus T_{i-1}| = 1$ for $i \in [\ell]$. 
Since $|w(T_{i-1}) - w(T_i)| \le 1$ for each $i \in [\ell]$, if $w(T_{\rm min}) \le p \le w(T_{\rm max})$, then 
there exists an index $i^* \in [\ell] \cup \{0\}$ such that $w(T_{i^*}) = p$. 
This means that $T_{i^*}$ has exactly $p$ red arcs. 

Therefore, we can find a desired arc set $T \in \mathcal{F}_k$ in polynomial time if one exists. 
\end{proof}

\section{Hardness of Matroid Intersection Reconfiguration}
\label{sec:MIRHardness}

In this section, we show that 
both {\sc RCB Testing} and {\sc Matroid Intersection Reconfiguration} require an exponential number of queries if the matroids are given by independence oracles.
We show the hardness of these two problems simultaneously. 
Note that the following theorem obviously implies Theorem~\ref{thm:hardRCB}.  

\begin{theorem}
    \label{thm:hardness}
    {\sc RCB Testing} and {\sc Matroid Intersection Reconfiguration} require an exponential number of independence queries
    if the matroids are given by independence oracles.
\end{theorem}

\begin{proof}
    Let $r\ge 2$ be an even integer and  $A = \{a_1, \dots, a_r\}$ and $B=\{b_1, \dots, b_r\}$ be two disjoint sets of size $r$, and let $E=A \cup B$. 
    Let $M_1=(E,\mathcal{B}_1)$ denote the partition matroid defined by 
    \[\mathcal{B}_1 = \{B \subseteq E \mid |B \cap \{a_1, b_1\}| = \dots = |B \cap \{a_r, b_r\}| = 1\}.\]
    Consider the sparse paving matroid $M_2=(E,\mathcal{B}_\mathcal{H})$ defined by Lemma~\ref{lem:paving} with \[\mathcal{H} = \{H \subseteq E  \mid |H \cap \{a_1, b_1\}| = \dots = |H \cap \{a_r, b_r\}| = 1, |H \cap A| = r/2\}.\] 
    That is, $\mathcal{B}_\mathcal{H} = \{B' \subseteq E \mid |B'|=r, B' \not \in \mathcal{H}\}$. 
    It is not difficult to check that $\mathcal{H}$ satisfies the condition of Lemma~\ref{lem:paving}, that is, $|H \cap H'| \le r-2$ holds for each $H, H' \in \mathcal{H}$, $H \ne H'$.

    \begin{claim} \label{cl:paving1}
    There is no reconfiguration sequence of common bases of $M_1$ and $M_2$ from $A$ to $B$. In particular, the pair $(M_1, M_2)$ does not satisfy \ref{it:rcb}. 
    \end{claim}
    \begin{proof}
    Assume to the contrary that there exists a sequence $B_0, \dots, B_\ell$ of common bases such that $A=B_0$, $B=B_\ell$ and $|B_i \setminus B_{i-1}| = |B_{i-1} \setminus B_i| = 1$ for $i \in [\ell]$. Since $|A \cap B_0| = 0$,  $|A \cap B_\ell| = r$ and $||B_{i-1}\cap A|-|B_i\cap A|| \le 1$ for each $i \in [\ell]$, there is an index $j \in [\ell]$ such that $|A \cap B_j| = r/2$. Then $B_j \in \mathcal{H}$, which is a contradiction, since $B_j$ is a basis of $M_2$.
    \end{proof}

    For a set $H_0\in \mathcal{H}$, consider the matroid $M'_2=(E, \mathcal{B}_{\mathcal{H}\setminus \{H_0\}})$ defined by Lemma~\ref{lem:paving}.
    That is, $\mathcal{B}_{\mathcal{H}\setminus \{H_0\}} = \mathcal{B}_{\mathcal{H}} \cup \{H_0\}$. 

    \begin{claim} \label{cl:paving2}
        There is a reconfiguration sequence of common bases of $M_1$ and $M'_2$ from $A$ to $B$.
        Furthermore, the pair $(M_1, M'_2)$ satisfies \ref{it:rcb}.
    \end{claim}
    \begin{proof}
        Let $\{i_1, \dots, i_r\} = \{1,\dots, r\}$ be such that $H_0 = \{b_{i_1}, \dots, b_{i_{r/2}}, a_{i_{r/2+1}}, \dots, a_{i_r}\}$. Then $B_j = \{b_{i_1}, \dots, b_{i_j}, a_{i_{j+1}}, \dots, a_{i_r}\}$ is a common basis of $M_1$ and $M'_2$ for $j \in [r] \cup \{0\}$, $B_0=A$, $B_r = B$, and $|B_{j-1}\setminus B_j| = |B_j\setminus B_{j-1}| = 1$ for $j \in [r]$. Therefore, $A$ is reconfigurable to $B$. 
        Observe that any common basis $B'$ of $M_1$ and $M'_2$ with $|B' \cap A| \le r/2$ is reconfigurable to $A$ in a greedy way. 
        Similarly, any common basis $B'$ of $M_1$ and $M'_2$ with $|B' \cap A| > r/2$ is reconfigurable to $B$. 
        By these observations and the fact that $A$ is reconfigurable to $B$, we see that the pair $(M_1, M'_2)$ satisfies \ref{it:rcb}.
    \end{proof}

    We are ready to prove the theorem. Consider any algorithm solving {\sc RCB Testing} or {\sc Matroid Intersection Reconfiguration}. By Claims~\ref{cl:paving1} and \ref{cl:paving2}, the algorithm answers differently for the pairs of matroids $(M_1, M_2)$ and $(M_1, M'_2)$, 
    where the input common bases are $A$ and $B$ if we consider {\sc Matroid Intersection Reconfiguration}. 
    Since the only independence query distinguishing $M_2$ and $M'_2$ is the query of the independence of $H_0$, the algorithm for $M_1$ and $M_2$ must query the independence of each $H_0 \in \mathcal{H}$. Therefore, any such algorithm uses at least $|\mathcal{H}| = \binom{r}{r/2}$ independence queries.
\end{proof}

\section{Concluding Remarks}
\label{sec:conclusion}

In this paper, we showed the reconfigurability of the union of $k$ arborescences for fixed $k$. 
In other words, we showed that the pair of matroids representing the union of $k$ arborescences satisfies~\ref{it:rcb}.  
It will be interesting to investigate whether~\ref{it:rcb} holds or not for other classes of matroid pairs, e.g., White's conjecture~\cite{white1980unique}. 

Another interesting topic is the length of a shortest reconfiguration sequence. 
For the union of $k$ arborescences, 
in Section~\ref{sec:upperbound}, we give an upper bound on the length of a shortest reconfiguration sequence, which is slightly smaller than $k |S \setminus T|$. 
Meanwhile, there is an example whose shortest length is $\frac{3}{2} |S \setminus T|$, which is obtained by combining many copies of the digraph in Example~\ref{ex:detour}. 
It will be interesting if we can close the gap between these bounds. 
It is also open whether we can find a shortest reconfiguration sequence from $S$ to $T$ in polynomial time if $S$ and $T$ are given as input. 

The length of a shortest reconfiguration sequence can be considered also for other classes of matroid pairs. 
When $M_2$ is the dual matroid of $M_1$, Hamidoune conjectured that there always exists a reconfiguration sequence whose length is at most the size of each common basis (or equivalently, the rank of the matroids); see~\cite{cordovil1993bases}. 
This conjecture is stronger than White's conjecture~\cite{white1980unique}, and is open even for some special cases, e.g.~when $M_1$ is a graphic matroid and $M_2$ is its dual.

\section*{Acknowledgement}
This work was supported by the Research Institute for Mathematical Sciences, an International Joint Usage/Research Center located in Kyoto University  and by the Lend\"{u}let Programme of the Hungarian Academy of Sciences -- grant number LP2021-1/2021.
The authors thank members of the project ``Fusion of Computer Science, Engineering and Mathematics Approaches for Expanding Combinatorial Reconfiguration'' for discussion on this topic. The authors are grateful to Andr\'as Frank for bringing the paper \cite{barahona1987exact} to their attention.
Yusuke Kobayashi was supported by JSPS KAKENHI Grant Numbers 
JP20H05795, 
JP20K11692, and  
JP22H05001. 
Tam\'{a}s Schwarcz was supported by the \'{U}NKP-22-3 and \'{U}NKP-23-3  New National Excellence Programs of the Ministry for Culture and Innovation from the source of the National Research, Development and Innovation Fund.

\bibliographystyle{plain}
\bibliography{main}

\end{document}